\newtheorem{lemma}{Lemma}
\newtheorem{theorem}{Theorem}
\theoremstyle{definition}
\newtheorem{example}{Example}
\pgfplotsset{compat=newest}
\begin{document}

\title{Coding Theorem for Generalized Reed-Solomon Codes}

\author{Xiangping Zheng and Xiao Ma,~\IEEEmembership{Member,~IEEE}
\thanks{Corresponding author: Xiao Ma.}
\thanks{The authors are with the School of Computer Science and Engineering, and also with
the Guangdong Key Laboratory of Information Security Technology, Sun Yat-sen University,  Guangzhou 510006, China (e-mail: zhengxp23@mail2.sysu.edu.cn; maxiao@mail.sysu.edu.cn).}}

\markboth{Journal of \LaTeX\ Class Files,~Vol.~1, No.~2, December~2023}%
{Shell \MakeLowercase{\rmit{et al.}}: A Sample Article Using IEEEtran.cls for IEEE Journals}


\maketitle

\begin{abstract}
In this paper, we prove that the sub-field images of generalized Reed-Solomon~(RS) codes can achieve the symmetric capacity of $p$-ary memoryless channels. Unlike the totally random linear code ensemble, as a class of maximum distance separable~(MDS) codes, the generalized RS code ensemble lacks the pair-wise independence among codewords and has non-identical distributions of nonzero codewords. 
To prove the coding theorem for the $p$-ary images of generalized RS codes, we analyze the exponential upper bound on the error probability of the generalized RS code in terms of its spectrum using random coding techniques. In the finite-length region, we present an ML decoding algorithm for the generalized RS codes over the binary erasure channels~(BECs). In particular, the algebraic structure of the generalized RS codes allows us to implement the parallel Lagrange interpolation to derive an ordered systematic matrix. Subsequently, we can reconstruct the ML codeword through a change of basis, accelerating the conventional Gaussian elimination~(GE), as validated in the simulation results. Additionally, we apply this decoding technique to the LC-OSD algorithm over the additive white Gaussian noise~(AWGN) channels with binary phase shift keying~(BPSK) modulation and three-level pulse amplitude modulation~(3PAM). Simulation results show that, in the high-rate region, generalized RS codes defined over fields of characteristic three with 3-PAM perform better than those defined over fields of characteristic two with BPSK. 
\end{abstract}

\begin{IEEEkeywords}
Coding theorem, generalized Reed-Solomon codes, ordered statistic decoding with local constraints~(LC-OSD) algorithm, random coding bound.
\end{IEEEkeywords}

\section{Introduction}
RS codes are optimal in the sense that they are maximum distance separable~(MDS) codes, which have been widely used in various systems, such as deep space communication, satellite communication, and data storage, due to their robust error correction capabilities for random and burst errors~\cite{Intro_RSapp}. An interesting question arises: are RS codes capacity-achieving? This paper provides a positive answer specifically for the sub-field images of generalized RS codes over symmetric memoryless channels. To the best of our knowledge, such a proof is not available in the literature. It is well-established that, without limiting our focus to generalized RS codes, we can demonstrate the existence of capacity-achieving linear codes over binary-input output-symmetric channels~\cite[Theorem 6.2.1]{Gallager1968} by introducing a totally random linear code ensemble. However, the conventional proof using the error exponent~\cite{Gallager1968} is not applicable to the generalized RS codes due to several significant challenges. Firstly, the codewords in the totally random linear code ensemble are pair-wise independent, while those in the generalized RS code ensemble are not.  Specifically, any two distinct codewords in a generalized RS code of length  $n$ and dimension $k$ have a Hamming distance of at least $n-k+1$. Additionally, while the non-zero codewords in the totally random linear code ensemble are identically distributed, this is not the case for the non-zero codewords in the generalized RS code ensemble. In this paper, we address these challenges by classifying the error patterns into subsets according to their Hamming weights, as developed in~\cite{ma2022new}~\cite{wang2024coding}. This proof technique is powerful, enabling us to derive the random error  exponent in terms of weight spectrum and hence to prove the coding  theorem for generalized RS codes.

In spite of their capacity-achieving property, long generalized RS codes lack efficient decoding algorithms. However, in the short-length region, RS codes can be decoded using the ordered statistic decoding~(OSD) algorithm~\cite{OSD1995Fossorier}~\cite{DorshOSD}. The basic idea of the original OSD is to generate a list of candidate codewords by flipping a small number of bits in the most reliable basis~(MRB). To reduce the complexity of the original OSD algorithm by early stopping unlikely unnecessary test patterns, the OSD with local constraints~(LC-OSD) algorithm~\cite{LC_OSD2022}~\cite{LC_OSDljf2023} extends the MRB  by including extra reliable positions. A disadvantage associated with the OSD algorithms is that they need to perform the Gaussian elimination (GE) with complexity of order $\mathcal{O}(k^3).$ Even worse, the GE is a serial algorithm and no efficient parallel implementation is available for a general matrix, inevitably causing extra decoding complexity and latency. This can be alleviated for RS codes, if the MRB is not mandatory as argued in~\cite{ma2024guessing}, by employing the quasi-OSD, where the GE is implemented in parallel by Lagrange interpolation~\cite{QuasiOSD}.
%
%
%

In this paper, we focus on the $p$-ary images of generalized RS codes, which are capacity-achieving~(demonstrated by theoretical proof) in the asymptotic regime as the code length tends to infinity and are capacity-approaching~(demonstrated by the performance bounds and the numerical results) in the finite-length regime. The main contributions of this paper are listed as follows.
\begin{itemize}
	\item Coding theorem: In the infinite-length region, we prove that the generalized RS code is capacity-achieving over $p$-ary symmetric memoryless channels.	
	\item Decoding algorithm: We present an efficient decoding algorithm for computing the exact maximum-likelihood~(ML) generalized RS codeword over the binary erasure channels~(BECs), in which an ordered systematic matrix is generated in parallel by the Lagrange interpolation and the ML codeword can be reconstructed through a change-of-basis on the systematic matrix. This implementation accelerates the conventional GE operation and thus reduces the decoding latency. Additionally, we derive an approximate upper bound on the ML decoding performance of generalized RS codes over the BECs. The tightness of the bound is confirmed by the simulation results.
	\item Applications: We adapt the Lagrange interpolation plus change-of-basis to the OSD-like algorithms over the additive white Gaussian noise~(AWGN) channels with binary phase shift keying~(BPSK) modulation and three-level pulse amplitude modulation~(3PAM). Simulation results demonstrate that, in the high-rate region, generalized RS codes defined over fields of characteristic three with 3-PAM exhibit a significant performance advantage compared with those defined over fields of characteristic two with BPSK, leading to a promising application in the ultra-reliable low-latency communication~(URLLC), especially when the raw data is ternary.
\end{itemize}

The rest of this paper is organized as follows. In Section~II, we present the system model and the coding theorem for the generalized RS codes. In Section~III, we prove that the generalized RS code is capacity-achieving in the infinite-length region by using the random coding exponent, and the generalized RS codes are analyzed by utilizing the weight spectrum and the random coding bound. In Section IV, we focus on the finite-length generalized RS codes. We present an efficient ML decoding algorithm of the generalized RS codes over BECs, where the ML codeword is reconstructed by parallel Lagrange interpolation and change-of-basis instead of the conventional GE method. Accordingly, the proposed decoding algorithm over BECs is extended to the decoding over BPSK-AWGN channels and 3PAM-AWGN channels. Section V concludes this paper.


Notation:  A random variable is denoted by an upper-case letter, say $X$, whose realization is denoted by the corresponding lower-case letter $x\in \mathcal{X}$. We use $P_{X}(x)$, $x\in \mathcal{X}$ to represent the probability mass~(or density) function of a random discrete~(or continuous) variable. For a vector of length $n$, we represent it as $\bm x = (x_0, x_1, \cdots, x_{n-1})$. More generally, we interchangeably use $x_i$ and $\bm x[i]$ to represent the $i$-th component of $\bm x$~($0\leq i < n$).

\section{The System Model and the Coding Theorem}
\subsection{Generalized RS Codes}
Let $\mathbb{F}_{p}$ be the finite field of size $p$, where $p$ is a prime or a power of a prime. For a positive integer $m$, let $q=p^m$ and $\mathbb{F}_{q}$ be the finite field defined by a primitive polynomial $f(x) = f_0 + f_1 x + \cdots + f_{m-1}x^{m-1} + x^{m} \in \mathbb{F}_p[x]$. Then we can write $\mathbb{F}_q = \{0\} \cup \{\alpha^i: 0\leq i < q-1\}$, where $\alpha$ is a root of $f(x)$. Any element $\beta \in \mathbb{F}_q$ can be represented by a $p$-ary vector $\boldsymbol{b} = (b_{0}, b_{1}, \dots, b_{m-1})\in \mathbb{F}_p^m$, meaning that $\beta = \sum_{0\leq j < m}b_{j} \alpha^j$. We call $\bm b$ the $p$-ary image of $\beta$. This one-to-one correspondence between $\mathbb{F}_p^m$ and $\mathbb{F}_q$ is denoted as $\beta = \phi(\bm b)$ or $\bm b = \phi^{-1}(\beta)$. It can be verified that $\phi^{-1}(\alpha^i \beta) = \phi^{-1}(\beta)\mathbf{A}^i$, where 
$$\mathbf{A} = \begin{pmatrix}
	0 & 1 & 0 & \cdots   & 0\\
	0 & 0 & 1 & \cdots  & 0\\
	\vdots & \vdots & \vdots &\ddots & \vdots \\
	0 & 0 & 0& \cdots & 1 \\
	-f_0 & -f_1 & -f_2 & \cdots & -f_{m-1}\\
\end{pmatrix}$$
is the companion matrix of $f(x)$.

For any $k < n \leq q-1$, an RS code of  dimension $k$ and length $n$, denoted by $\mathscr{C}_{\text{RS}}[n,k]$, can be defined by a generator matrix $\mathbf{G}_{\text{RS}}$ of size $k \times n$, 
\begin{equation}\label{G}
	\mathbf{G}_{\text{RS}} = \begin{pmatrix}
		1 & 1 & 1 & \cdots & 1  \\
		1 & \alpha & \alpha^2 & \cdots &  \alpha^{n-1}\\
		1 & \alpha^2 & \alpha^4 & \cdots &  \alpha^{2(n-1)}\\
		\vdots & \vdots & \vdots &\ddots & \vdots \\
		1 & \alpha^{k-1} & \alpha^{2(k-1)} & \cdots &  \alpha^{(k-1)(n-1)}\\
	\end{pmatrix}.
\end{equation}



A generalized RS~(coset) code of  dimension $k$ and length $n$ specified by ($\alpha^{j_0}, \cdots, \alpha^{j_{n-1}}$) and $\bm a \in \mathbb{F}_q^n$ is defined by $\bm a + \mathscr{C}_{\text{GRS}}[n,k] = \{\bm c =  \bm u \mathbf{G}_{\text{GRS}} + \bm a|\bm u \in \mathbb{F}_q^k \}$, where  
\begin{equation}\label{GRS_G}
	\mathbf{G}_{\text{GRS}} = \begin{pmatrix}
		\alpha^{j_0} & \alpha^{j_1} & \cdots & \alpha^{j_{n-1}}  \\
		\alpha^{j_0} & \alpha^{j_1} \alpha & \cdots &  \alpha^{j_{n-1}}\alpha^{n-1}\\
		\alpha^{j_0} & \alpha^{j_1}\alpha^2  & \cdots & \alpha^{j_{n-1}}\alpha^{2(n-1)}\\
		\vdots & \vdots  &\ddots & \vdots \\
		\alpha^{j_0} & \alpha^{j_1}\alpha^{k-1} & \cdots &  \alpha^{j_{n-1}}\alpha^{(k-1)(n-1)}\\
	\end{pmatrix},
\end{equation} 
where $0\leq  j_i < q-1$ for $0\leq i < n$. Given a vector $\boldsymbol{u}=(u_0,u_1,\dots, u_{k-1})\in \mathbb{F}_q^k$ and its associated polynomial $u(x) = u_0 + u_1 x + \cdots + u_{k-1} x^{k-1} \in \mathbb{F}_q[x]$, encoding $\boldsymbol{u}$ into $\bm u \mathbf{G}_{\text{GRS}} +\bm a \in \mathbb{F}_q^n$ is equivalent to first evaluating $u(x)$ at $x \in\{ \alpha^j, 0 \leq j \leq n-1\}\subseteq  \mathbb{F}_q$ and then scaling by ($\alpha^{j_0}, \cdots, \alpha^{j_{n-1}}$) and shifting by the vector $\bm a$.
 If we extend the evaluation of $u(x)$ at $x = 0$, we can construct $\mathscr{C}_{\text{GRS}}[n,k]$ for any $0<k<n \leq q$.  
	
	The $p$-ary image of $\mathscr{C}_{\text{GRS}}$, denoted by $\mathscr{C}^{(p)}_{\text{GRS}}$, can be  obtained by mapping  each codeword $\bm c \in \mathscr{C}_{\text{GRS}}$ into $\phi^{-1}(\bm c)$. Here we use by convention $\phi^{-1}(\bm c)$ to denote the component-wise mapping induced by $\phi^{-1}$. Let $\mathbf{G}_{\text{GRS}}^{(p)}$ be the $p$-ary matrix of size $K \times N$, where $K=km$ and $N = nm$, obtained by replacing an element $\alpha^i$ in $\mathbf{G}_{\text{GRS}}$ with the corresponding $p$-ary matrix $\mathbf{A}^i$. It can be checked that $\mathbf{G}_{\text{GRS}}^{(p)}$ is a generator matrix of $\mathscr{C}^{(p)}_{\text{GRS}}$.

\subsection{System Model}
\begin{figure}[!t]
	\centering
	\includegraphics[width=5.8in]{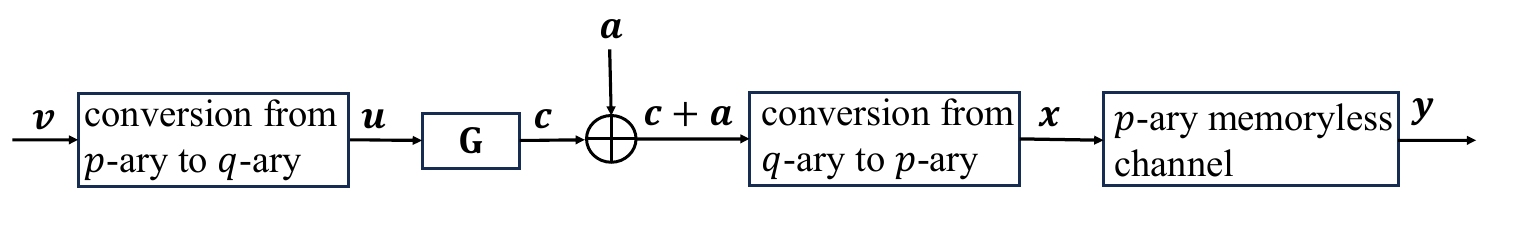}
	\caption{A system model with linear coding, where $\mathbf{G}$ is the $q$-ary generator matrix of size $k\times n$, the conversion from $p$-ary to $q$-ary is a component-wise mapping $\phi(\cdot)$ and the conversion from $q$-ary to $p$-ary is the inverse mapping.}
	\label{system_model}
\end{figure}

 In this paper, we consider a system model that is depicted in Fig.~\ref{system_model}, where $\boldsymbol{v} \in \mathbb{F}_p^K$ is referred to as the message vector, which is interpreted as an equivalent vector $\bm u = \phi(\bm v) \in \mathbb{F}_q^k$ and encoded into a generalized RS codeword $\bm c = \bm u\mathbf{G}_{\text{GRS}} \in \mathbb{F}_q^n$. Then the vector $\bm x = \phi^{-1}(\bm c + \bm a) \in \mathbb{F}_p^N$  is transmitted over a $p$-ary memoryless channel, resulting in  $\boldsymbol{y} \in \mathcal{Y}^N$, where $\mathcal{Y}$ is the alphabet of the channel outputs. The task of the receiver is to recover message vector $\bm v$  from $\bm y$.  The ML decoding is to find a message vector $\hat{\bm v}$ such that 
$P_{Y|X}\left(\bm y|\hat{\bm v}\mathbf{G}_{\text{GRS}}^{(p)} + \phi^{-1}(\bm a)\right)$ $ \geq P_{Y|X}\left(\bm y |\bm v\mathbf{G}_{\text{GRS}}^{(p)}+ \phi^{-1}(\bm a)\right)$ for all $\bm v \in \mathbb{F}_p^K$, where $P_{Y|X}(\cdot|\cdot)$ is the conditional probability mass~(or density) function specifying the channel. The frame error rate~(FER) is defined as $\text{Pr}\{\bm \hat{\bm V} \neq \bm V \}$, where $\hat{\bm V}$ is the estimate of $\bm V$ from decoding. The main coding theorem for generalized RS codes is presented as follows.
\begin{theorem}
	\label{theorem_for_block_code} Let $R$ be a positive number such that $R<I(X;Y)$, where $I(X;Y)$ is the mutual information for the $p$-ary memoryless channel with uniform input distribution. For an arbitrarily small positive number $\epsilon$, one can always find a generalized RS code~(over a sufficiently large field) of length $n$ and dimension $k$ such that its $p$-ary image satisfies that $K/N \geq R$ and that the FER  $\leq \epsilon$.
\end{theorem}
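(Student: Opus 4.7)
The plan is a random coding argument adapted to the generalized RS ensemble. I would inject randomness through the column multipliers $(\alpha^{j_0},\dots,\alpha^{j_{n-1}})$ and the coset vector $\bm a$, both drawn uniformly. The goal is to show that the ensemble-averaged ML frame error rate decays exponentially in $N=nm$ whenever $R<I(X;Y)$; the existence of one specific code meeting the bound then follows from averaging, and one takes $m$ (hence $q=p^m$) large enough that $q\ge n+1$.

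The first step is to reduce the ML error probability to a weight-classified union bound over non-zero codewords. By the symmetry supplied by the random coset, the transmitted $p$-ary vector is uniform on $\mathbb{F}_p^N$, so standard symmetrization allows us to condition on the zero message. For each non-zero codeword $\bm c\in\mathscr{C}_{\text{GRS}}$, the pairwise error probability for the $p$-ary symmetric channel depends only on the Hamming weight $w$ of its image $\phi^{-1}(\bm c)\in\mathbb{F}_p^N$, and can be bounded in the usual Gallager--Bhattacharyya fashion with a tilt parameter $\rho\in(0,1]$ by $2^{-w E_b(\rho)}$, where $E_b(\rho)$ is the per-symbol Gallager function for the channel with uniform input. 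Classifying error events by $w$ yields
\begin{equation}
\overline{P_e}\;\le\;\sum_{w=1}^{N}\overline{B}_w\,2^{-w E_b(\rho)},
\end{equation}
where $\overline{B}_w$ is the expected number of image codewords of $p$-ary weight $w$ under the random ensemble.

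The combinatorial heart of the proof is to evaluate $\overline{B}_w$. Since the $q$-ary generalized RS code is MDS, its $q$-ary weight distribution $A_s$ is given exactly by the MacWilliams formula for MDS codes and vanishes for $s<n-k+1$. The random multipliers make the non-zero $\mathbb{F}_q$-coordinates of each codeword marginally uniform on $\mathbb{F}_q\setminus\{0\}$; their $p$-ary unfoldings are therefore uniform on $\mathbb{F}_p^m\setminus\{\bm 0\}$. Consequently $\overline{B}_w$ can be written as a convolution of $A_s$ with the $s$-fold convolution of the weight distribution on $\mathbb{F}_p^m\setminus\{\bm 0\}$, admitting a closed form that I can upper-bound cleanly using the MDS structure. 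Plugging this into the inequality above and performing Gallager's optimization over $\rho$ gives a bound of the shape $\overline{P_e}\le 2^{-N(E_0(\rho)-\rho R)}$ whose exponent is strictly positive for every $R<I(X;Y)$, finishing the proof.

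\textbf{Main obstacle.} The delicate step is the weight enumeration, not the channel-side exponent. Generalized RS codewords are neither pairwise independent (any two differ in at least $n-k+1$ positions) nor identically distributed, so Gallager's classical per-symbol factorization of the average over the code ensemble is unavailable. The classification-by-weight technique inherited from \cite{ma2022new,wang2024coding}, coupled with the multiplier-induced uniformity on $\mathbb{F}_q\setminus\{0\}$, is precisely the ingredient that converts the MDS $q$-ary spectrum into a $p$-ary bound on $\overline{B}_w$; care is needed to ensure this bound is tight enough that the final exponent $E_0(\rho)-\rho R$ really does approach $I(X;Y)$ in the $\rho\to 0$ limit rather than losing a constant that would shift the achievable rate.
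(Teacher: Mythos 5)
Your weight-classification idea and the MDS bound on the number of messages per weight class are indeed the right ingredients and match the paper's strategy, but the channel-side step in your plan fails, and it fails at exactly the point that makes this theorem nontrivial. Your displayed inequality $\overline{P_e}\le\sum_{w}\overline{B}_w\,2^{-wE_b(\rho)}$ is a plain union bound: expected weight enumerator times an averaged pairwise error bound. A genuine pairwise error probability between two words differing in $w$ $p$-ary positions decays at best like $Z^{w}$ for a Chernoff/Bhattacharyya parameter $Z$; no free Gallager parameter $\rho$ survives inside a pairwise bound once you have averaged over the channel output. Since the ensemble-average spectrum of the generalized RS image is essentially binomial, $\sum_w\overline{B}_wZ^w\approx p^{-(N-K)}\bigl(1+(p-1)Z\bigr)^N$, which vanishes only for rates below the cutoff rate $R_0=E_0(1)<I(X;Y)$. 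Hence the final step ``performing Gallager's optimization over $\rho$ gives $\overline{P_e}\le 2^{-N(E_0(\rho)-\rho R)}$'' does not follow from your inequality: Gallager's $\rho$ enters through $\Pr\{\bigcup_{\bm u}E_{\bm u}\mid\bm y\}\le\bigl(\sum_{\bm u}\Pr\{E_{\bm u}\mid\bm y\}\bigr)^{\rho}$ applied \emph{before} averaging over $\bm y$, and a weight-enumerator union bound averages over $\bm y$ first, destroying the interplay between $s=1/(1+\rho)$ and the outer exponent $\rho$ that produces $E_0(\rho)$. Your own ``main obstacle'' paragraph senses this, but the proposed structure cannot deliver an exponent that survives up to capacity.

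The paper's proof repairs precisely this point. It partitions the nonzero messages into classes $\mathcal{A}_w$ according to the $q$-ary weight $w$ of $\bm u\mathbf{G}_{\text{RS}}$, and \emph{within each class} applies the $\rho$-th power to the conditional union probability given $\bm y$, together with the Markov-type tilted bound with $s=1/(1+\rho)$. The random scaling and random support permutation make $\bm u\mathbf{G}$ uniform over all weight-$w$ vectors (eq.~(4)), so the inner sum over a class is relaxed to a uniform sum over all of $\mathbb{F}_p^N$ at the price of the factor $\widetilde{A}_w=|\mathcal{A}_w|\,q^n/\bigl((q-1)^w\binom{n}{w}\bigr)$, yielding $\exp[-NE_r(\widetilde{R}_w)]$ per class with $\widetilde{R}_w\to R$ as $n\to\infty$; summing the at most $k$ classes costs only $(\log_p k)/N\to 0$. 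This is a Shulman--Feder-type spectrum argument, not a union--Bhattacharyya bound. Two secondary gaps in your setup: you randomize only the multipliers and the coset, omitting the support permutation needed for the uniformity in (4) (fixable, but then the fixed supports must be accounted for); and your claim that the pairwise error probability depends only on the image weight presumes a $p$-ary \emph{symmetric} channel, whereas the theorem covers general memoryless channels with uniform inputs -- the random coset only makes the transmitted word uniform, it does not by itself make the pairwise exponent a function of weight alone.
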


\section{Proof of Main Coding Theorem}
\subsection{Generalized RS Code Ensemble}
To prove the coding theorem for generalized RS codes over $p$-ary memoryless channels, we introduce the following generalized RS code ensemble.

\emph{Generalized RS code ensemble:} A generalized RS code ensemble transforms $\bm u$ into $\bm u \mathbf{G}+\bm a \in \mathbb{F}_q^n$ with the generator matrix $\mathbf{G} = \mathbf{G}_{\text{RS}} \mathbf{\Lambda} \mathbf{\Pi}$ of size $k\times n$, where $\mathbf{\Lambda}$ is a diagonal matrix of order $n$ with elements on diagonal being drawn independently and uniformly from $\mathbb{F}_q \setminus \{0\}$, $\mathbf{\Pi} $ is a permutation matrix of order $n$ being generated uniformly at random, and $\bm a  \in \mathbb{F}_q^n$ is a random vector with each component being generated independently and uniformly from $\mathbb{F}_q$.

It is well known that Theorem~\ref{theorem_for_block_code} holds if we do not restrict our attention to the generalized RS codes. Actually, the existence of a $p$-ary linear code as required by Theorem~\ref{theorem_for_block_code} can be proved in a similar way to~that for~\cite[Theorem 6.2.1]{Gallager1968} by introducing a totally random linear code ensemble. However, the conventional proof is not applicable to the generalized RS code ensemble. This is because the codewords in the totally random linear code ensemble are pair-wise independent, while the codewords in the generalized RS code ensemble are not. Actually, any two distinct codewords in the generalized RS code ensemble have a Hamming distance of at least $n-k+1$. Given that the message vector $\bm 0 \in \mathbb{F}_q^k$ is encoded into a random vector $\bm a$, the non-zero message vector $\bm u \in \mathbb{F}_q^k$ with $W_H(\bm u \mathbf{G}_{\text{RS}}) = w$~($n-k+1\leq w \leq n$) is encoded into a random vector $\bm u\mathbf{G}+\bm a$ distributed as, for any vector  $\bm c  \in \mathbb{F}_q^n$ with $W_H(\bm c) = w$,
\begin{equation}\label{Wdifferent}
	P(\bm u\mathbf{G} + \bm a= \bm c + \bm a| \bm a) = 	P(\bm u\mathbf{G}= \bm c) = \frac{1}{(q-1)^w \binom{n}{w}}.
\end{equation}


\subsection{Random Coding Exponent}

\begin{lemma}
	\label{error_exponent} 
	Let a $p$-ary discrete memoryless channel have transition probabilities $P(j|i)$. For any positive integer $N$ and positive number $R$, an $(N,R)$ block code is a code of block-length $N$ with $\lceil p^{NR}\rceil$ codewords. Consider the ensemble of $(N,R)$ block codes in which each letter of each codeword is independently selected with the probability assignment $Q(i)$. Then, for $0\leq \rho \leq 1$, the ensemble average probability of decoding error using ML decoding satisfies    
	\begin{equation}
	\begin{aligned}
		&\emph{{\text{Pr}}}\{{\rm error}\} \leq p^{NR\rho} \left\{\sum\limits_{j} \left\{\sum\limits_{i} Q(i) (P(j|i))^{1/(1+\rho)}\right\}^{1+\rho} \right\}^{N}.
	\end{aligned}
\end{equation}
Therefore
	\begin{equation}
		\begin{aligned}
			\label{conditional_pro}
			&\emph{{\text{Pr}}}\{{\rm error}\} \leq \exp\Big[-NE_r(R)\Big]\text{,}
		\end{aligned}
	\end{equation}
where $\text{exp}$ is denoted as the base-$p$ exponent, 
	\begin{equation}
		E_r(R)=\max\limits_{0\leq\rho\leq 1}(E_0(\rho,\bm Q)-\rho R)\text{,}
	\end{equation}
	and
	\begin{equation}
		\begin{aligned}
			E_0(\rho,\bm Q)=&-\log_p\Bigg\{\sum\limits_{j} \left\{\sum\limits_{i} Q(i) (P(j|i))^{1/(1+\rho)}\right\}^{1+\rho}\Bigg\}\text{.}
		\end{aligned}	
		\label{E_independent}
	\end{equation}
	The function $E_r(R)$ is referred to as the \textit{random coding exponent}, which is a decreasing function of $R$ and $E_r(R)>0$ if $R<I(\bm Q;\bm P)$, where $I(\bm Q;\bm P)$ is the mutual information specified by the
	input distribution $\bm Q$ and the channel conditional distribution $\bm P$, as shown in~\cite[pp. 140-144]{Gallager1968}.
%
\end{lemma}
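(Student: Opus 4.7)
The bound is a special case of Gallager's classical random coding theorem for discrete memoryless channels, and the plan is to follow that recipe. Fix a particular message index $m$ with transmitted codeword $\bm x_m$ and condition on both $\bm x_m$ and the channel output $\bm y$. An ML decoder errs only if some competing codeword $\bm x_{m'}$ with $m' \neq m$ satisfies $P(\bm y|\bm x_{m'}) \geq P(\bm y|\bm x_m)$. For any $s>0$, the ratio $[P(\bm y|\bm x_{m'})/P(\bm y|\bm x_m)]^{s}$ is then at least one on that event, so a union bound over competitors yields a tractable pointwise upper bound on the conditional error probability in terms of likelihood ratios.

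The next step is Gallager's standard trick: since any probability is at most one, the union bound can be replaced by its $\rho$-th power for any $\rho \in [0,1]$. I would then average over the random choice of the $M-1 \leq \lceil p^{NR}\rceil$ competing codewords, invoking Jensen's inequality (using concavity of $t \mapsto t^{\rho}$ on $[0,\infty)$ for $\rho \in [0,1]$) to pull the expectation inside the $\rho$-power. Mutual independence of the codewords together with the product structures $Q(\bm x) = \prod_i Q(x_i)$ and $P(\bm y|\bm x) = \prod_i P(y_i|x_i)$ then allow the $N$-fold sums over channel input and output alphabets to collapse into an $N$-th power of single-letter sums. The decisive algebraic choice is $s = 1/(1+\rho)$, which makes the exponent $1 - s\rho$ appearing on $P(\bm y|\bm x_m)$ equal $s$, thereby matching it with the exponent inside the competitors' sum and producing the clean $(1+\rho)$-th power structure visible in the statement.

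Summing over $\bm y$ and bounding $M-1 \leq \lceil p^{NR}\rceil$ then reproduces the first displayed inequality exactly, after which taking the best $\rho \in [0,1]$ yields the exponential form with $E_r(R) = \max_{0 \leq \rho \leq 1}(E_0(\rho,\bm Q) - \rho R)$. I do not anticipate a genuine obstacle: the manipulations are routine once the choice $s = 1/(1+\rho)$ is made, and the only care needed is to apply Jensen on the expectation over the competing codewords (not over $\bm x_m$ and $\bm y$). The remaining claims about $E_r(R)$, namely monotonicity in $R$ and strict positivity whenever $R < I(\bm Q;\bm P)$, follow from the standard facts that $E_0(\rho,\bm Q)$ is concave in $\rho$, vanishes at $\rho = 0$, and has right derivative $I(\bm Q;\bm P)$ there; these are exactly the properties cited from~\cite[pp.~140--144]{Gallager1968} in the statement, so I would simply invoke them rather than re-derive them.
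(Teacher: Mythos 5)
Your proposal is correct and is essentially the same argument the paper relies on: the paper gives no in-line proof of this lemma but cites Gallager's textbook derivation (union bound, the $\rho$-power trick with $\rho\in[0,1]$, averaging over the i.i.d.\ codeword ensemble, and the choice $s=1/(1+\rho)$ to collapse to single-letter $(1+\rho)$-th powers), which is exactly what you reconstruct. The concluding properties of $E_r(R)$ are likewise just cited from Gallager, as you propose.
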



\subsection{Proof of Theorem~\ref{theorem_for_block_code}}
Let $\epsilon>0$ be an arbitrarily small number. For any positive integer $n$, we can define a  generalized RS code $\mathscr{C}_{\text{GRS}}[n,k]$ over $\mathbb{F}_{p^m}$ by setting $m=\lceil \log_p (n+1)\rceil$ and $k = \lceil nR \rceil$. It is easy to see that the code rate $K/N$ approaches $R$ from above as $n$ goes to infinity. Due to the randomness of the added vector $\bm a$, we can analyze the performance by assuming, without loss of generality, that the message vector $\bm 0 \in \mathbb{F}_q^k$ is encoded into $\bm a$ and thus $\bm x = \phi^{-1} (\bm a)$ is transmitted. Here, we make the following partition
\begin{equation}
	\begin{aligned}	  \mathbb{F}_q^{k} \setminus \{\bm 0\} = \bigcup \limits_{n-k+1 \leq w \leq n} \mathcal{A}_w,
	\end{aligned}
\end{equation}
where $	\bm u \in \mathcal{A}_w$ if and only if $W_H(\bm u \mathbf{G}_{\text{RS}}) =w$. It can be checked that $|\mathcal{A}_{w}| \leq q^{k-(n-w)}\binom{n}{w}$ since the message vector can be reconstructed from any $k$ positions of a codeword~($n-w$ zeros and $k-n+w$ nonzeros) for MDS codes. Given a received sequence $\bm y$, denote by $E_{\bm u, \bm a}$ the event that $\bm u\mathbf{G}+\bm a$ is more likely than $\bm a$. For the decoding error, we have 
\begin{equation}
	\label{error0}
	\begin{aligned}
		{\rm{Pr}}\{{\rm error}|\bm x\}
		&=\sum\limits_{\bm a\in \mathbb{F}_q^{n}} \frac{1}{q^n} \sum\limits_{\bm y\in \mathcal{Y}^{N}}P(\bm y|\bm x)\cdot  {\rm {Pr}}\Bigg\{{\bigcup\limits_{\bm u \in \mathbb{F}_q^{k} \backslash \{\bm 0\}}E_{\bm u,\bm a}\Bigg\}}\\
		&\leq \sum\limits_{w=n-k+1}^{n}\sum\limits_{\bm x\in \mathbb{F}_p^{N}} \frac{1}{p^N} \sum\limits_{\bm y\in \mathcal{Y}^{N}}P(\bm y|\bm x)\cdot  {\rm {Pr}}\Bigg\{{\bigcup\limits_{\bm u \in \mathcal{A}_w}E_{\bm u, \bm a}\Bigg\}}
		\text{.}
	\end{aligned}	
\end{equation}

For each $w$~($n-k+1 \leq w \leq n$), any non-negative number $\rho \leq 1$, and $s = 1/(1+\rho)$, we have,
\begin{align}\label{latterbound1}
	&\sum\limits_{\bm x\in \mathbb{F}_p^{N}} \frac{1}{p^N} \sum\limits_{\bm y\in \mathcal{Y}^{N}}P(\bm y|\bm x)\cdot{\rm {Pr}}\Bigg\{{\bigcup\limits_{\bm u \in \mathcal{A}_{w}}E_{\bm u, \bm a}\Bigg\}} \nonumber \\
	&\leq\sum\limits_{\bm x\in \mathbb{F}_p^{N}} \frac{1}{p^N} \sum\limits_{\bm y\in \mathcal{Y}^{N}}\!P(\bm y|\bm x){\rm Pr}\left\{\bigcup\limits_{ \bm u\in\mathcal{A}_{w}} P(\bm y|\phi^{-1}(\bm u\mathbf{G}+\bm a)=\tilde{\bm x})\geq P(\bm y|\bm x)\right\} \nonumber\\
	&\leq \sum\limits_{\bm x\in \mathbb{F}_p^{N}} \frac{1}{p^N} \sum\limits_{\bm y\in \mathcal{Y}^{N}}\!P(\bm y|\bm x)\left(\sum\limits_{\bm u \in \mathcal{A}_{w}}\!{\rm Pr}\{P(\bm y|\phi^{-1}\!(\bm u\mathbf{G}+\bm a)\!=\! \tilde{\bm x})\geq P(\bm y|\bm x)\}\right)^{\rho} \nonumber\\	
	&\overset{(*)}{\leq} \sum\limits_{\bm x\in \mathbb{F}_p^{N}} \frac{1}{p^N} \sum\limits_{\bm y\in \mathcal{Y}^{N}}\!P(\bm y|\bm x)\left\{|\mathcal{A}_{w}| \frac{1}{(q-1)^w \binom{n}{w}}\!\sum\limits_{\tilde{\bm x}=\phi^{-1}\!(\bm u\mathbf{G}+\bm a)\in \mathbb{F}_p^{N}: \atop W_H(\bm u\mathbf{G}) = w}\!\! \frac{(P(\bm y|\tilde{\bm x}))^{s}}{(P(\bm y|\bm x))^{s}}\right\}^{\rho} \nonumber\\
	&\overset{(**)}{\leq} \widetilde{A}_w^\rho \sum\limits_{\bm x\in \mathbb{F}_p^{N}} \frac{1}{p^N} \sum\limits_{\bm y\in \mathcal{Y}^{N}}\!P(\bm y|\bm x)\left\{\sum\limits_{\tilde{\bm x}\in \mathbb{F}_p^{N}}  \frac{1}{p^N} \frac{(P(\bm y|\tilde{\bm x}))^{s}}{(P(\bm y|\bm x))^{s}}\right\}^{\rho}\nonumber\\
	&= \widetilde{A}_w^\rho  \sum\limits_{\bm y\in \mathcal{Y}^{N}} \sum\limits_{\bm x\in \mathbb{F}_p^{N}} \frac{1}{p^N}\!P(\bm y|\bm x)^{1-s\rho}\left\{\sum\limits_{\tilde{\bm x}\in \mathbb{F}_p^{N}} \frac{1}{p^N} (P(\bm y|\tilde{\bm x}))^{s}\right\}^{\rho}\nonumber\\
	&\overset{(***)}{=} \widetilde{A}_w^\rho  \sum\limits_{\bm y\in \mathcal{Y}^{N}} \left\{\sum\limits_{\bm x\in \mathbb{F}_p^{N}} \frac{1}{p^N} (P(\bm y|\bm x))^{1/(1+\rho)}\right\}^{1+\rho}\nonumber\\
	&= \widetilde{A}_w^\rho\left\{\sum\limits_{y_i\in \mathcal{Y}} \left\{\sum\limits_{x_i\in \mathbb{F}_p} \frac{1}{p} (P(y_i|x_i))^{1/(1+\rho)}\right\}^{1+\rho} \right\}^{N}.
\end{align}
where the inequality $(*)$ follows from~\eqref{Wdifferent} and the Markov inequality, the inequality $(**)$ follows by denoting $\widetilde{A}_w = |\mathcal{A}_{w}| q^n/ ((q-1)^w \binom{n}{w})$ and including more terms in the inner summation, and the equality $(***)$ holds since $\bm x$ is a dummy variable of summation.

Thus, we have 
\begin{equation}
	\begin{aligned}
	\sum\limits_{\bm x\in \mathbb{F}_p^{N}} \frac{1}{p^N} \sum\limits_{\bm y\in \mathcal{Y}^{N}}P(\bm y|\bm x)\cdot{\rm {Pr}}\Bigg\{{\bigcup\limits_{\bm u \in \mathcal{A}_{w}}E_{\bm u, \bm a}\Bigg\}} \leq \exp\Big[-NE_r(\widetilde{R}_w )\Big]\text{,}
	\end{aligned}
\end{equation}
where
\begin{equation}
	\widetilde{R}_w = \frac{1}{N}\log_p\widetilde{A}_w,
\end{equation}
\begin{equation}
	E_r(\widetilde{R}_w )=\max\limits_{0\leq\rho\leq 1}(E_0(\rho)-\rho \widetilde{R}_w )\text{,}
\end{equation}
and
\begin{equation}
	\begin{aligned}
		E_0(\rho)=&-\log_p\Bigg\{\sum\limits_{y_i\in \mathcal{Y}} \left\{\sum\limits_{x_i\in \mathbb{F}_p} \frac{1}{p} (P(y_i|x_i))^{1/(1+\rho)}\right\}^{1+\rho} \Bigg\}\text{.}
	\end{aligned}	
\end{equation}

Defining $\tilde{R} = \frac{1}{m }\log_p\left(1+\frac{1}{p^m-1}\right)+\frac{k}{n}$ and recalling that  $|\mathcal{A}_{w}| \leq q^{k-(n-w)}\binom{n}{w}$, we have 
\begin{align}	
	\widetilde{R}_w
	&\leq \frac{1}{n m  }\log_p\left(\left(1+\frac{1}{q-1}\right)^w q^k\right)\nonumber\\
	&= \frac{w}{n m }\log_p\left(1+\frac{1}{p^m-1}\right)+\frac{k}{n}\nonumber \\
	& \leq \tilde{R}.
\end{align}	

Since the error exponent is a decreasing function of $R$ as stated in Lemma~\ref{error_exponent}, we have 
\begin{equation}\label{error_p}
	\begin{aligned}
		{\rm{Pr}}\{{\rm error}|\bm x\}
		& \leq k\exp\left[{-N E_r\left({\widetilde{R}}\right)}\right]\\
		& = \exp\left[-N\left( E_r\left(\widetilde{R}\right)-\frac{\log_p k}{N}\right)\right].
	\end{aligned}	
\end{equation}

Letting $n \rightarrow \infty$, we have $\widetilde{R} \rightarrow  R < I(X;Y)$ and $ (\log_p k)/N \rightarrow  0$. Hence we have $E_r\left(\widetilde{R}\right)>0$ and ${\rm Pr}\{{\rm error}|\bm x\} \leq \epsilon$ for sufficiently large $n$.

Since the average FER of the generalized RS code ensemble satisfies that $\rm{FER} \leq \epsilon$, there exists a matrix $\mathbf{G} = \mathbf{G}_{\text{RS}} \mathbf{\Lambda} \mathbf{\Pi}$ of size $\lceil nR  \rceil \times n$ and a vector $\bm a$ such that $K/N \geq R$ and that the FER $\leq \epsilon$. Since the channel is memoryless and stationary, we see that $\mathbf{\Pi}$ is not essentially required and can be removed without affecting the FER, indicating that the corresponding generalized RS code $\{\bm c = \bm u \mathbf{G}_{\text{RS}} \mathbf{\Lambda} + \bm a \mathbf{\Pi}^{-1}|\bm u \in \mathbb{F}_q^k\}$ also satisfies that $\rm{FER} \leq \epsilon$. This completes the proof of Theorem 1.

Considering the generalized RS codes defined over finite fields of characteristic two, we have the following theorem.

\begin{theorem}
Let $R$ be a positive number such that $R<C$ where $C$ is the binary-input output-symmetric~(BIOS) channel capacity. For an arbitrarily small positive number $\epsilon$, one can always find a generalized RS code~(over a sufficiently large field) of length $n$ and dimension $k$ such that its binary image satisfies that $K/N \geq R$ and that the FER  $\leq \epsilon$.
\end{theorem}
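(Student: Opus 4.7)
The plan is to derive Theorem~2 as a direct specialization of Theorem~\ref{theorem_for_block_code} to $p=2$, after invoking the classical fact that for a binary-input output-symmetric (BIOS) channel the capacity is attained by the uniform input distribution. Concretely, I would set up the argument by noting that Theorem~\ref{theorem_for_block_code} already establishes the capacity-achieving property of $p$-ary images of generalized RS codes over arbitrary $p$-ary memoryless channels with respect to the uniform-input mutual information $I(X;Y)$; the only extra ingredient needed for the BIOS case is to identify the capacity $C$ with $I(X;Y)$ under uniform input.

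First, I would recall the standard symmetry argument: for a BIOS channel with transition law $P(y|x)$, $x\in\{0,1\}$, there exists an involution $\sigma$ on the output alphabet $\mathcal{Y}$ such that $P(y|0)=P(\sigma(y)|1)$ for every $y\in\mathcal{Y}$. Using this symmetry together with the concavity of mutual information in the input distribution (swapping the roles of $0$ and $1$ while simultaneously applying $\sigma$ leaves $I(X;Y)$ invariant, so the optimum must be a fixed point of this swap), one concludes that $I(X;Y)$ is maximized by $P_X(0)=P_X(1)=1/2$, and hence $C = I(X;Y)\big|_{X\sim \mathrm{Unif}\{0,1\}}$.

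Once this identification is in place, the conclusion follows at once by applying Theorem~\ref{theorem_for_block_code} with $p=2$ and treating the given BIOS channel as the $p$-ary memoryless channel in that theorem's hypothesis. Given $\epsilon>0$ and $R<C$, the rate condition becomes $R<I(X;Y)$ with $X$ uniform on $\mathbb{F}_2$, so Theorem~\ref{theorem_for_block_code} furnishes a generalized RS code over $\mathbb{F}_{2^m}$ of length $n$ and dimension $k$ whose binary image satisfies $K/N\geq R$ and $\mathrm{FER}\leq \epsilon$, which is exactly the claim of Theorem~2.

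No substantial new obstacle appears. The weight-class partition $\mathbb{F}_q^k\setminus\{\bm 0\}=\bigcup_w \mathcal{A}_w$, the bound $|\mathcal{A}_w|\leq q^{k-(n-w)}\binom{n}{w}$, the derivation of $\widetilde{R}_w$, and the random coding exponent estimate leading to~\eqref{error_p} are all independent of the channel alphabet size and specialize verbatim to $p=2$. The only conceptual point worth emphasizing, and the ``hard part'' insofar as anything is hard, is the initial capacity identification $C=I(X;Y)|_{\mathrm{uniform}}$ for BIOS channels, which is a classical result (see, e.g., \cite[pp.~91--94]{Gallager1968}) and does not require any new coding-theoretic argument.
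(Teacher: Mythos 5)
Your proposal is correct and follows essentially the same route as the paper, which proves this theorem simply by citing it as a special case of Theorem~\ref{theorem_for_block_code} with $p=2$; your added justification that the BIOS capacity $C$ equals the uniform-input mutual information $I(X;Y)$ is the right (and only) identification needed, and the paper leaves it implicit. The only detail the paper adds that you omit is the remark that the random shifting vector $\bm a$ is unnecessary for BIOS channels, but this is a side observation rather than a gap in your argument.
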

\begin{proof}
This is a special case of Theorem 1, and the proof is omitted here. It is worth pointing out that the shifting vector $\bm a$ is not necessary  for BIOS channels.
\end{proof}

\subsection{Random Coding Union Bound}
From the above proof, we have an upper bound for an arbitrary $q$-ary code with a generator matrix $\mathbf{G}$ of size $k \times n$. Let $A_w$, $w = 0,1,\cdots, n$ be its weight spectrum, i.e., $A_w = |\{\bm u \mathbf{G} : W_H(\bm u \mathbf{G} ) = w, \bm u \in \mathbb{F}_q^k\}|$. We can create an ensemble that transforms $\bm u$ into $\bm u \mathbf{G}\mathbf{\Lambda}\mathbf{\Pi} + \bm a$, similar to the generalized RS ensemble. Then $\rm{Pr}\{\rm{error}\}$, the average FER over $p$-ary memoryless channels, is upper-bounded by 
\begin{equation}
	\begin{aligned}
		{\rm{Pr}}\{{\rm error}\}
		&\leq \sum\limits_{w=0}^{n} \exp\left[{-N E_r\left({\widetilde{R}_w}\right)}\right]\\
		& \leq n\exp\left[{-N E_r\left({\widetilde{R}}\right)}\right]\\
		& = \exp\left[-N\left( E_r\left(\widetilde{R}\right)-\frac{\log_p n}{N}\right)\right],
	\end{aligned}	
\end{equation}
where $\widetilde{A}_w = A_{w}q^n / ((q-1)^w \binom{n}{w})$, $	\widetilde{R}_w = \frac{1}{n}\log_p\widetilde{A}_w$, $	\widetilde{R} = \max\limits_{0\leq w \leq n} \widetilde{R}_w$, and
\begin{equation}
	E_r(\widetilde{R}_w )=\max\limits_{0\leq\rho\leq 1}(E_0(\rho)-\rho \widetilde{R}_w )\text{,}
\end{equation}
and
\begin{equation}
	\begin{aligned}
		E_0(\rho)=&-\log \Bigg\{\sum\limits_{y_i\in \mathcal{Y}} \left\{\sum\limits_{x_i\in \mathbb{F}_p} \frac{1}{p} (P(y_i|x_i))^{1/(1+\rho)}\right\}^{1+\rho} \Bigg\}\text{.}
	\end{aligned}	
\end{equation}
The above upper bound, which is similar to that presented for binary codes in~\cite{Shulman1999}, is powerful for proving coding theorem but typically loose in the finite length regime. In this case, we turn to the  random coding union~(RCU) bound.


Similar to but different from the RCU bound~\cite{polyanskiy2010channel}, which applies to totally random codes, we derive a RCU bound for the generalized RS code ensemble.

Without loss of generality, suppose that the message vector $\bm 0 \in \mathbb{F}_q^k$ is encoded into $\bm a$ and thus $\bm x = \phi^{-1} (\bm a)$ is transmitted over a $p$-ary memoryless channel, resulting in $\mathbf{Y}$. Given the partition of the message, denote by $  \mathbb{F}_q^{k} \setminus \{\bm 0\} = \bigcup \limits_{n-k+1 \leq w \leq n} \mathcal{A}_w$, where $|\mathcal{A}_w| = A_w$ and $\bm u \in \mathcal{A}_w$ if and only if $W_H(\bm u \mathbf{G}_{\text{RS}}) =w$. The average error probability using ML decoding can be upper bounded by
\begin{equation}
\begin{aligned}
	{\rm{FER}}_{\text{avg}}
	&=\mathbb{E}\left[\min\left\{1, \sum\limits_{\bm u \in \mathbb{F}_q^k\setminus \{\bm 0\}}{\rm{Pr}}\left\{P(\mathbf{Y}|\phi^{-1}(\bm u\mathbf{G} + \boldsymbol{a}))\geq P(\mathbf{Y}|\phi^{-1}(\boldsymbol{a}))
	\right\}
	\right\}\right]\\
	&\leq \mathbb{E}\left[\min\left\{1, \sum\limits_{w=n-k+1}^{n}
	\sum\limits_{\bm u \in \mathcal{A}_w}{\rm{Pr}}\left\{P(\mathbf{Y}|\phi^{-1}(\bm u\mathbf{G} + \boldsymbol{a}))\geq P(\mathbf{Y}|\phi^{-1}(\boldsymbol{a}))
	\right\}
	\right\}\right]\\
		& \leq  \mathbb{E}\left[\min\left\{1, \sum\limits_{w=n-k+1}^{n}
	A_w\sum\limits_{\bm x=\phi^{-1}\!(\bm u\mathbf{G}+\bm a)\in \mathbb{F}_p^{N}: \;W_H(\bm u\mathbf{G}) = w, \atop P(\mathbf{Y}|\bm x)\geq P(\mathbf{Y}|\phi^{-1}(\boldsymbol{a}))}
	\frac{1}{(q-1)^w \binom{n}{w}}	\right\}\right]\\
&\leq \mathbb{E}\left[\min\left\{1, \sum\limits_{w=n-k+1}^{n}
A_w {\rm{PEP}}_w(\bm a, \bm Y)
\right\}\right],
\end{aligned}	
\end{equation}
where  ${\rm{PEP}}_w(\bm a, \bm Y)$ is the pairwise error probability, defined as
\begin{equation}
	\begin{aligned}
	{\rm{PEP}}_w(\bm a, \bm Y) \triangleq {\rm{Pr}}\left\{P(\mathbf{Y}|\phi^{-1}(\bm C_w+ \boldsymbol{a}))\geq P(\mathbf{Y}|\phi^{-1}(\boldsymbol{a}))\right\},
	\end{aligned}	
\end{equation}
where $\bm C_w \in \mathbb{F}_q^n$ is a random vector uniformly distributed over the set of all vectors with $W_H(\bm C_w)= w$.

\section{Decoding of Generalized RS Codes}
We have proved that the generalized RS code is (symmetric) capacity-achieving over $p$-ary discrete memoryless channels in the infinite-length region. However, it should be well understood that no efficient decoding algorithm is available for general RS codes. Nonetheless, this does not preclude the existence of efficient decoding algorithms for short generalized RS codes, as well as for long high-rate or low-rate generalized RS codes.


\subsection{Binary Erasure Channels}
Consider the BECs, where a received bit is either completely known or completely unknown~(erased). We denote the bit erasure probability by $\varepsilon$. The channel capacity is given by $C(\varepsilon) = 1 - \varepsilon$~\cite{thomas2006elements}. Recalling the system model shown in Fig.~\ref{system_model}, we can assume that $\bm a = \bm 0$. Specifically, associated with a vector $\bm u = \phi(\bm v) \in \mathbb{F}_q^k$ with  $q=2^m$ is a generalized RS codeword $\bm c = \bm u\mathbf{G}_{\text{GRS}} \in \mathbb{F}_q^n$. Then the vector $\bm x = \phi^{-1}(\bm c) \in \mathbb{F}_2^N$  is transmitted over a BEC, resulting in  $\boldsymbol{y}$. For the sake of simplicity of notation, we omit the subscript ``GRS" hereafter, using $\mathbf{G}$ to represent $\mathbf{G}_{\text{GRS}}$ and $\mathbf{G}^{(b)}$ to represent $\mathbf{G}^{(b)}_{\text{GRS}}$, where the $\mathbf{G}^{(b)}_{\text{GRS}}$ is the binary matrix of $\mathbf{G}_{\text{GRS}}$. For a vector $\bm c \in \mathbb{F}_q^n$, we interchangeably use $\bm c^{(b)}$ and $\phi^{-1}(\bm c)$ to represent its binary image.

Denote by $\mathcal{L} \subseteq\{0,1,\cdots, N-1\}$ the index set of known bits in $\bm c^{(b)}$~(available from $\bm y$), and by $\mathcal{L}_c$ the index set of erasures. Denote by $\mathbf{G}^{(b)}_{\mathcal{L}}$ and $\mathbf{G}^{(b)}_{\mathcal{L}_c}$ the matrix $\mathbf{G}^{(b)}$ restricted to columns indexed by $\mathcal{L}$ and $\mathcal{L}_c$, respectively. Let $\bm c_{\mathcal{L}}^{(b)}$ be the sub-vector of $\bm c^{(b)}$ consisting of $\bm c^{(b)}[i]$, $i\in \mathcal{L}$. We have the following set of linear equations,
\begin{equation}\label{Hck}
	\begin{aligned}
	\bm v	\mathbf{G}^{(b)}_{\mathcal{L}} = {\bm c_{\mathcal{L}}^{(b)}}. 
	\end{aligned}	
\end{equation}
As a result, it can be seen that ML decoding over the BECs is equivalent to solving~\eqref{Hck} for $\bm v \in \mathbb{F}_2^K$. This can be done by the GE. The main issue of the decoding is the delay caused by the GE, which has no efficient parallel implementation available for general matrices. We now present a much more efficient ML decoding algorithm. 

Upon receiving $\boldsymbol{y}$, we calculate the bit sequence $\hat{\bm x}$ by 
\begin{equation}\label{harddecison}
	\hat{x}_i=
	\begin{cases} 
		y_i, & \mbox{if } i\in \mathcal{L}\\
		e, & \mbox{if }i\in \mathcal{L}_c\\
	\end{cases}
	,~0\leq i < N,
\end{equation}
where $e$ denotes the erased bit. The bit sequence $\hat{\bm x}$ is also written as a symbol sequence $\hat{\bm c}=(\hat{c}_0,\hat{c}_1, \dots, \hat{c}_{n-1})$. If there are more than $k$ symbols completely known in $ \hat{\bm c}$, the message polynomial can be simply reconstructed from these symbols using interpolation. Otherwise, we first employ the parallel Lagrange interpolation to derive an ordered systematic generator matrix and then change the basis by swapping the columns indexed by $\mathcal{L}_c$ in the identity matrix with the columns indexed by $\mathcal{L}$ but not in the identity matrix.


For two components $\hat{c}_i$ and $\hat{c}_j$ of the symbol sequence $\hat{\bm c}$, we say that $\hat{c}_i$ is more reliable than $\hat{c}_j$ if $\hat{c}_i$ contains more bits in $\mathcal{L}$ than $\hat{c}_j$ does. In the case when two symbols contain the same number of bits in $\mathcal{L}$, the symbol that comes first in lexicographic order is considered more reliable. Then we sort 
$(0,1,\dots,n-1)$ into $\boldsymbol{p} = (0,1,\dots,n-1)\mathbf{\Pi}_{\bm p} = (p_0,p_1,p_2,\dots, p_{n-1})$ such that the first $k$ positions of $\hat{\bm c}\mathbf{\Pi}_{\bm p}$ correspond to the most reliable symbols. Now our objective is to transform the generator matrix $\mathbf{G}\mathbf{\Pi}_{\bm p}$ into a systematic form, denoted by
\begin{equation}
	\begin{array}{c p{0.5cm}}
		\text{\footnotesize{column labeling:}}   \\ 
		\\ 
		\!\!\!\hspace{0.55cm}\widetilde{\mathbf{G}}=
		\\ 
		\\
		\\
		\\
	\end{array}\hspace{-0.38cm}
	\begin{pNiceMatrix}[first-row]
		p_{0} & p_{1} & \cdots & p_{k-1} & p_{k} & \cdots & p_{n-1} \\
		1 & 0 & \cdots& 0 & g_{0,k}& \cdots & g_{0,n-1}  \\
		0 & 1  & \cdots & 0 & g_{1,k}& \cdots & g_{1,n-1} \\
		\vdots & \vdots &\ddots & \vdots & \vdots &\ddots & \vdots\\
		0 & 0 & \cdots &  1 & g_{k-1,k}& \cdots & g_{k-1,n-1} 
	\end{pNiceMatrix}.
\end{equation}
This transformation is equivalent to finding and evaluating  $k$ message polynomials $u^{(i)}(x) = u^{(i)}_{0} + u^{(i)}_{1} x + \cdots + u^{(i)}_{k-1} x^{k-1}$~($0\leq i<k$), each for one row such that
\begin{equation}
	\begin{cases} 
		\alpha^{j_{p_i}}u^{(i)}(\alpha^{p_i}) = 1,\\
		\alpha^{j_{p_j}}u^{(i)}(\alpha^{p_j}) = 0, \quad j \in \{0,1,\dots,k-1\}\setminus\{i\}\\
	\end{cases}
\end{equation}
and
\begin{equation}\label{lagea}
	g_{i,j} = \alpha^{j_{p_{j}}} u^{(i)}(\alpha^{p_{j}})
\end{equation}
for $k\leq j < n$.

Different from GE, these message polynomials can be calculated in parallel by Lagrange interpolation  as 
\begin{equation}\label{Larpo}
	u^{(i)}(x) = \frac{1}{\alpha^{j_{p_i}}}\frac{\prod \limits_{j=0,j\neq i}^{k-1} (x-\alpha^{p_j})}{\prod \limits_{j=0,j\neq i}^{k-1} (\alpha^{p_i}-\alpha^{p_j})}, \quad 0\leq i < k. 
\end{equation}

Correspondingly, the systematic generator matrix $\widetilde{\mathbf{G}}$ can be written as a binary systematic matrix $\widetilde{\mathbf{G}}^{(b)} = [\mathbf{I}, \mathbf{P}]$ with $\mathbf{I}$ of size $K \times K$ and $\mathbf{P}$ of size $K\times (N-K)$. Let a vector $\bm q = (q_0,\cdots,q_{N-1})$ record the permuted positions such that $\widetilde{\mathbf{G}}^{(b)} = \mathbf{G}^{(b)} \mathbf{\Pi}_{\bm q}$, where $\mathbf{\Pi}_{\bm q}$ is a permutation matrix defined by $\bm q = (0,1,\cdots,N-1) \mathbf{\Pi}_{\bm q}$. Define $\mathcal{B}=\{q_0, q_1,\cdots, q_{K-1}\}\cap \mathcal{L}$ of size $|\mathcal{B}| = K-\Delta$. Without loss of generality, we may rewrite  $\widetilde{\mathbf{G}}^{(b)}$ in the following form by row and column permutations,
\begin{equation}\label{1GE}
	\widetilde{\mathbf{G}}_1^{(b)}=
	\begin{bNiceArray}{cw{c}{1.3cm}|[tikz=densely dashed]cw{c}{1.9cm}|[tikz=densely dashed]cw{c}{1.3cm}}[margin, first-row, last-col]
		\Block{1-2}{_{K-\Delta\textrm{ columns}}} & & \Block{1-2}{_{|\mathcal{L}|-K+\Delta\textrm{ columns}}} & & \Block{1-2}{_{N-|\mathcal{L}|\textrm{ columns}}} \\
		\Block{3-2}{\mathbf{I}_{|\mathcal{B}|}} & &\Block{3-2}{\mathbf{Q}_{11}} & &\Block{3-2}{\mathbf{Q}_{12}} & & \Block{3-1}{^{\rotate K-\Delta\textrm{ rows}}} \\
		& & & \\
		& & & \\
		\hdashline[3pt/1pt]
		\Block{1-2}{\mathbf{0}}& &\Block{1-2}{\mathbf{Q}_{21}} & & \Block{1-2}{\mathbf{Q}_{22}} & & \Block{1-1}{^{\rotate \Delta\textrm{ rows  }}} \\
	\end{bNiceArray},
\end{equation}
where the first $|\mathcal{L}|$ positions correspond to the positions in $\mathcal{L}$ and the columns of $\mathbf{I}_{|\mathcal{B}|}$ are associated with $\mathcal{B}$. If $\text{rank}(\mathbf{Q}_{21}) = \Delta$, which occurs with a probability\footnote{It can be proved that the probability is not less than $1-2^{-\delta}$ that a totally random binary matrix of size $\Delta \times (\Delta+\delta)$ is full-rank.} of about $1-2^{-(|\mathcal{L}|-K)}$, then the GE is applied to $\mathbf{Q}_{21}$,  referred to as the change-of-basis here, resulting in the following matrix
\begin{equation}\label{2GE}
	\widetilde{\mathbf{G}}_2^{(b)}=
	\begin{bNiceArray}{cw{c}{1.2cm}|[tikz=densely dashed] cw{c}{1cm}cw{c}{1.4cm}|[tikz=densely dashed]cw{c}{1.4cm}}[margin, first-row, last-col]
		\Block{1-2}{_{K-\Delta\textrm{ columns}}} & & \Block{1-2}{_{\Delta\textrm{ columns}}} & & \Block{1-2}{_{|\mathcal{L}|-K\textrm{ columns}}}& & \Block{1-2}{_{N-|\mathcal{L}|\textrm{ columns}}} \\
		\Block{3-2}{\mathbf{I}_{|\mathcal{B}|}} & &\Block{3-2}{\mathbf{0}} & &\Block{3-2}{\mathbf{R}_{11}} & &\Block{3-2}{\mathbf{R}_{12}} & & \Block{3-1}{^{\rotate K-\Delta\textrm{ rows}}} \\
		& & & \\
		& & & \\
		\hdashline[3pt/1pt]
		\Block{1-2}{\mathbf{0}}& &\Block{1-2}{\mathbf{I}_{\Delta}} & &\Block{1-2}{\mathbf{R}_{21}} & & \Block{1-2}{\mathbf{R}_{22}} & & \Block{1-1}{^{\rotate \Delta\textrm{rows  }}} \\
	\end{bNiceArray}.
\end{equation}
If $\text{rank}(\mathbf{Q}_{21}) < \Delta$, we simply set the erased bits in the first $K$ positions as zero after the change-of-basis,  delivering a decoding output. For the case of $n-k < k$, the algorithm performs the Lagrange interpolation to derive a systematic parity-check matrix and performs the change-of-basis on the corresponding matrix.  It is more efficient for codes of rate $R > 1/2$. 

The cardinality $L \triangleq |\mathcal{L}|$ of the set $\mathcal{L}$ is a binomial random variable with the probability mass function~(PMF) given by
\begin{equation}
	P(L = \ell) = 
	\binom{N}{\ell}\bigl(1-\epsilon\bigr)^{\ell}
	\epsilon^{N - \ell}.
\end{equation}
Hence, the ML decoding performance of a generalized RS code $\mathscr{C}[N,K]_{2^m}$ with binary minimum Hamming distance $d_{\text{min}}$ over a BEC with erasure probability $\epsilon$ can be expressed as 
\begin{equation}
	\begin{aligned}\label{MLbound}
		P_{\text{ML}}
		&\leq\text{Pr}\left\{\text{the rank of the $K \times L$ matrix $\mathbf{G}^{(b)}_{\mathcal{L}}$ is less than $K$}\right\}\\
		& = \sum_{\ell=0}^{N} P(L=\ell)\cdot\text{Pr}\left\{\text{rank}\left(\mathbf{G}^{(b)}_{\mathcal{L}}\right)< K | L=\ell\right\}.
	\end{aligned}
\end{equation}
For $\ell < K$, the rank of $\mathbf{G}^{(b)}_{\mathcal{L}}$ is definitely less than $K$. When $L>N-d_{\text{min}}$, there must exist a one-to-one correspondence between the unerased positions of the received vector and the message vector, indicating that $\text{Pr}\left\{\text{rank}\left(\mathbf{G}^{(b)}_{\mathcal{L}}\right)< K | L>N-d_{\text{min}}\right\} = 0$. Thus, we have 

\begin{equation}
	\begin{aligned}\label{MLbound}
		P_{\text{ML}}
	&\leq P(L < K) + \sum_{\ell=K}^{N-d_{\text{min}}} P(L=\ell)\cdot\text{Pr}\left\{\text{rank}\left(\mathbf{G}^{(b)}_{\mathcal{L}}\right)< K | L=\ell\right\}\\
		&\overset{(*)}{\leq} P(L < K) + \sum_{\ell=K}^{N-(n-k+1)} P(L=\ell)\cdot\text{Pr}\left\{\text{rank}\left(\mathbf{G}^{(b)}_{\mathcal{L}}\right)< K | L=\ell\right\},
	\end{aligned}
\end{equation}
where $P(L < K) = \sum\limits_{\ell=0}^{K-1}	\binom{N}{\ell}\bigl(1-\epsilon\bigr)^{\ell}
\epsilon^{N - \ell}$, and the inequality $(*)$ holds since $d_{\text{min}} \geq n-k+1$.

Recall that the probability of a $K\times \ell$ totally random matrix being not full-rank is less than $2^{-(\ell-K)}$. 
Taking into account that  $\mathbf{G}^{(b)}_{\mathcal{L}}$ is not totally random, we have the following approximate upper bound
\begin{equation}
	\text{Pr}\left\{\text{rank}\left(\mathbf{G}^{(b)}_{\mathcal{L}}\right)<K | L=\ell\right\} \lesssim 2^{-(\ell-K)},
\end{equation}
for $\ell \geq K$.

Therefore, the average ML decoding performance of the generalized RS code can be estimated as follows, referred to as Approx-UB,
\begin{equation}
	\begin{aligned}\label{approMLbound}
		P_{\text{ML}} &\lesssim  P(L < K) + \sum_{\ell=K}^{N-(n-k+1)} P(L=\ell)\cdot 2^{-(\ell-K)}.
	\end{aligned}
\end{equation}

As we can see, the Lagrange interpolation can be completed in parallel by two time steps,
resulting in negligible decoding latency. 
In the GE process, suppose that reducing a column to a unit vector constitutes one iteration. The original GE begins with a given offline systematic matrix, taking on average about $K \varepsilon$ iterations. In contrast, the change-of-basis after the Lagrange interpolation reduces the average number of iterations down to typically less than $K \varepsilon$. This reduction can potentially  decrease the decoding latency.


\begin{example}
	Consider the generalized RS codes with different code rates over BECs with $\varepsilon \in \{0.1,0.2,0.3\}$. The codes are defined over the field $\mathbb{F}_{2^8}$ with $m=8$ and  mapped into $\mathbb{F}_{2}^{1024}$, denoted by $\mathscr{C}[1024,K]_{2^8}$ with $K \in \{256,512,768\}$. We have compared  the average number of iterations for the change-of-basis and the original GE, as shown in Table.~\ref{tab1}. These results are counted averaging over one million decoding trials. It can be seen that the average number of iterations increases as the code rate increases. Compared with the original GE approach, on average, the change-of-basis reduces the number of iterations across all codes by over $30\%$. 
	
	
\end{example}
\begin{table}[!t]
	\renewcommand{\arraystretch}{1.3}
	\centering
	\caption{The average number of iterations for $\mathscr{C}_{\text{GRS}}[1024,K]_{2^8}$. Here, $m = 8$, $K \in \{256,512,786\}$, and $\varepsilon \in \{0.1,0.2,0.3\}$.}
	\label{tab1}
	\begin{tabular}{ |c|c | c|  c |c | c|  c|c | c|  c|}
		\hline
		$\varepsilon$ & \multicolumn{3}{c|}{$0.1$} &\multicolumn{3}{c|}{$0.2$} &\multicolumn{3}{c|}{$0.3$}  \\
		\hline
		$K$ & $256$ & $512$ &  $768$  & $256$ & $512$ &  $768$  & $256$ & $512$ &  $768$ \\
		\hline 
		Original GE &  $34.89$ & $51.75$ &  $76.80$  & $51.27$ & $102.41$ &  $153.61$  & $76.80$ & $153.60$ &  $188.75$ \\
		\hline 
	   Change-of-basis & $2.44$ & $9.69$ &  $40.97$  & $10.67$ & $44.58$ &  $106.33$  & $26.26$ & $88.28$ &  $138.55$ \\
	  \hline
	\end{tabular}
\end{table}

\begin{example}
We have simulated the generalized RS codes with different code lengths over the field $\mathbb{F}_{2^8}$ and $\mathbb{F}_{2^9}$. The simulation results are shown in Figs.~\ref{BECfer1}-\ref{BECfer2}. For comparison, we have also plotted in Fig.~\ref{BECfer1} the lower bound under the successive cancellation~(SC) decoding~\cite{ref1Arikan} for the polar codes with a code rate of $1/2$, and the FER performance of the polar code with a total code rate of $1/2$ and a cyclic redundancy check~(CRC) of length $6$, using an ML decoder~\cite{EML2021}. As we can see, the generalized RS codes perform well when the code length is large, providing numerical evidence that the generalized RS code achieves channel capacity as the code length increases. Additionally, the generalized RS codes outperform the polar codes with the same code length and dimension, under the ML decoding. The Approx-UB~\eqref{approMLbound} is also provided in Figs.~\ref{BECfer1}-\ref{BECfer2}, from which we observe that the Approx-UB becomes tighter as the code length increases, particularly in the low erasure probability region. 
\end{example}

\begin{figure}[!t]
	\centering
	\includegraphics[width=4.8in]{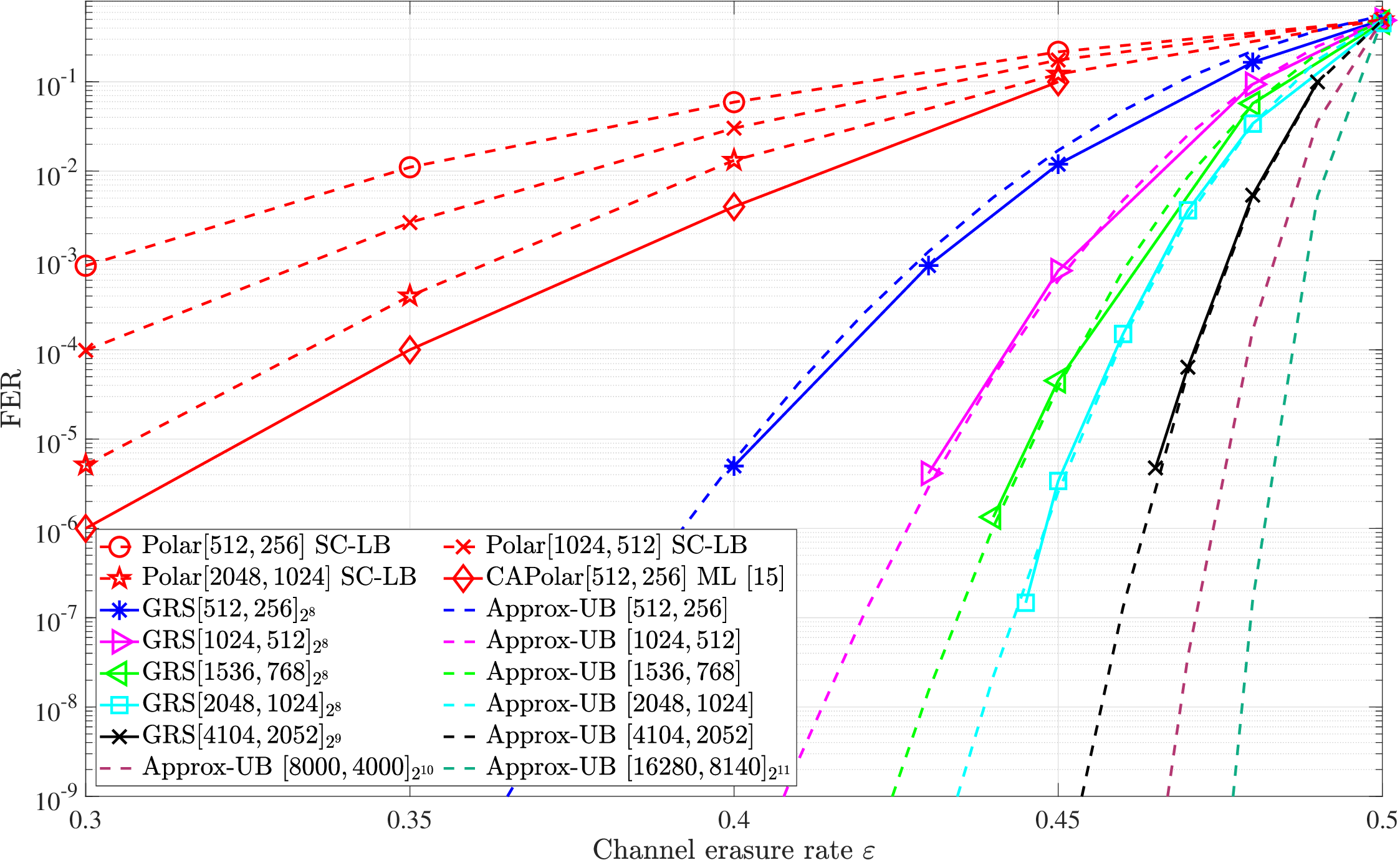}
	\caption{Performance comparison between rate-$0.5$ generalized RS codes and rate-$0.5$ polar codes with different code lengths over BECs.}
	\label{BECfer1} 
\end{figure}

\begin{figure}[!t]
	\centering
	\includegraphics[width=4.8in]{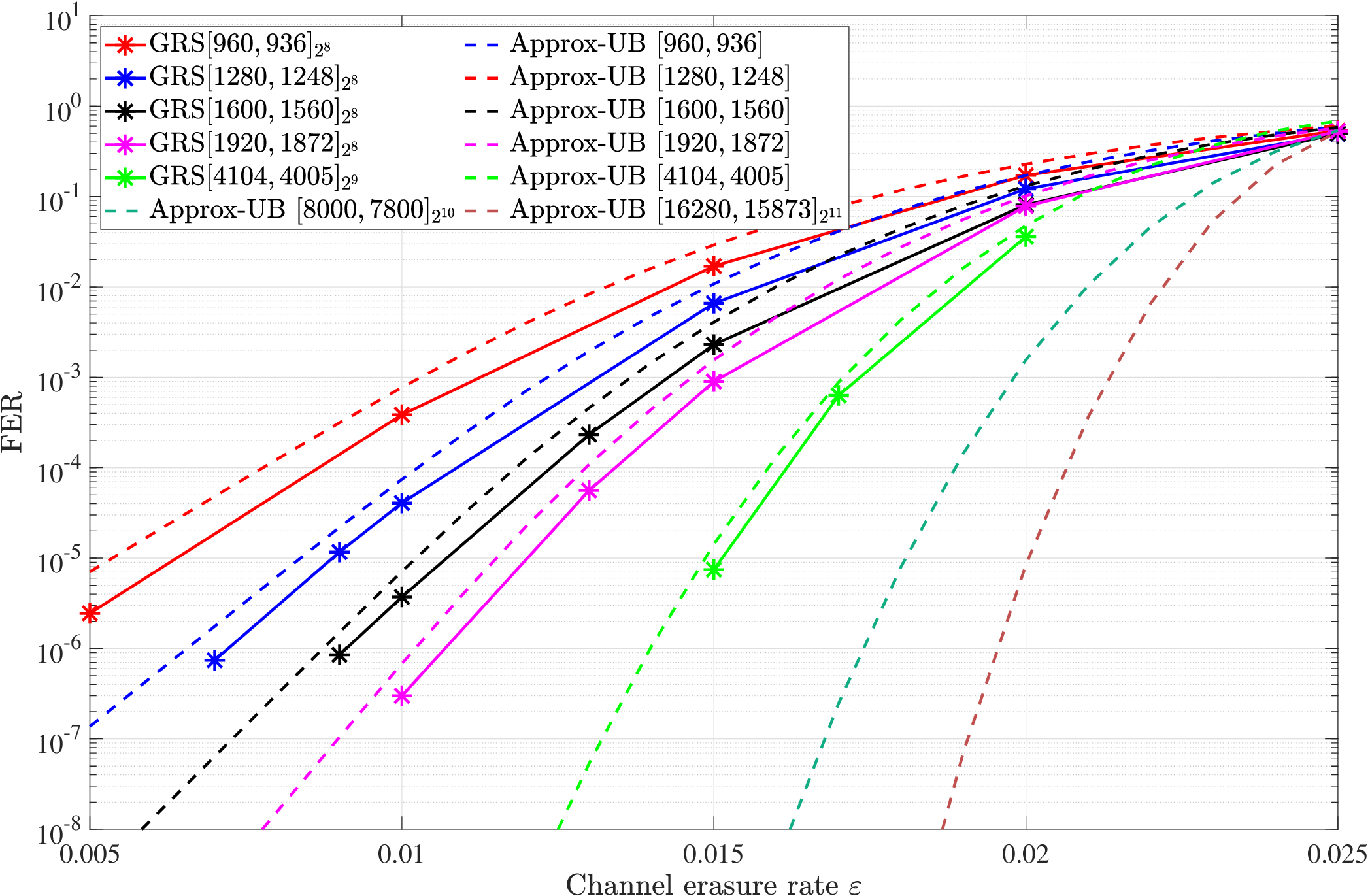}
	\caption{Performance of rate-$0.975$ generalized RS codes with different code lengths over BECs. Here, the curves are shown for $N = 960, 1280, 1600, 1920, 4104$.}
	\label{BECfer2} 
\end{figure}


As we know, both original RS codes and generalized RS codes are MDS codes. Specifically, these codes defined over $\mathbb{F}_{p^m}$ with (in terms of symbols) code length $n$ and dimension $k$ have a minimum Hamming distance of $n-k+1$. By construction, the $p$-ary images of both original RS codes and generalized RS codes maintain $p$-ary minimum distance of at least $(N-K)/m+1$. Given a $p$-ary code length $N$ and dimension $K$, as $m$ increases, both original RS codes and generalized RS codes using symbol-level decoding algorithms have a significant degradation in performance over the $p$-ary transmission channels due to the decrease in minimum Hamming distance. However, this may not be the case for $p$-ary-level decoding algorithms. 

Consider the BEC channels again. As illustrated in the following example, the performance of original RS codes degrades as $m$ increases, while the performance of generalized RS codes remains stable, both using bit-level decoding. This is reasonable as explained below. Indeed, in terms of bits, the minimum Hamming distance of generalized RS codes can be greater than that of original RS codes due to the introduced non-zero scaling vector ($\alpha^{j_0}, \cdots, \alpha^{j_{n-1}}$), resulting in potential performance gain in the large fields compared with original RS codes.

\begin{example}
	Consider the RS codes whose binary images have identical code length $N$ and dimension $K$ over different fields~(different $m$). We present in Fig.~\ref{RSGRS_differentm} the performance of the original RS codes and the generalized RS codes over BECs, from which we observe that as $m$ increases from $5$ to $20$, the performance loss of the original RS codes becomes significant. In contrast, the performance of the generalized RS codes over different $m$ is similar. This example also serves to explain why we need generalized RS codes for proving the coding theorem. Of course, to construct a code of length $N$ and $K$, we prefer small $m$ for reducing the complexity of Lagrange interpolation. 
\end{example}

\begin{figure}[!t]
	\centering
	\subfloat[Original RS codes.\label{RS_differentm}]{\includegraphics[width=3.1in]{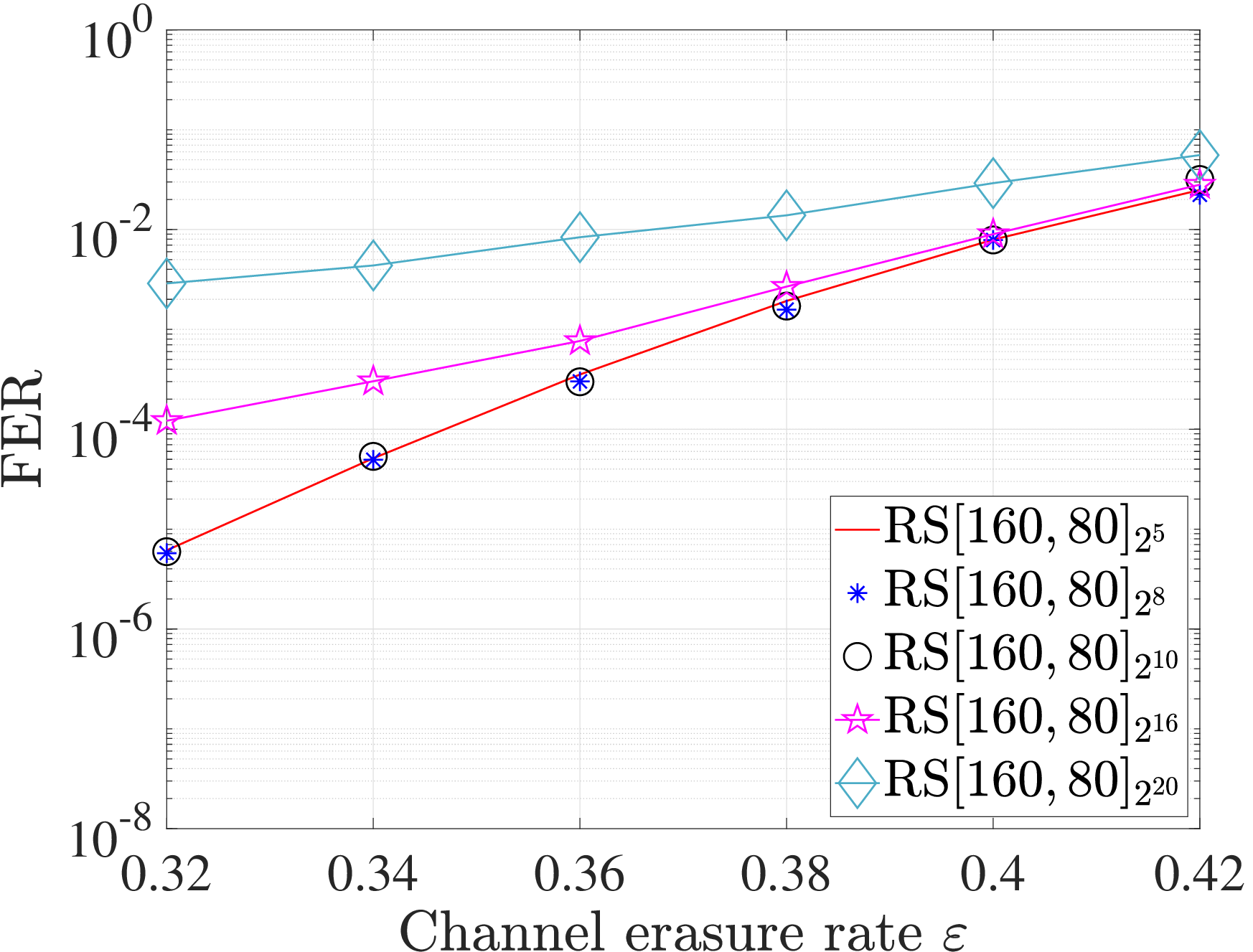}}
	\hfill
	\subfloat[Generalized RS codes.\label{GRS_differentm}]{\includegraphics[width=3.1in]{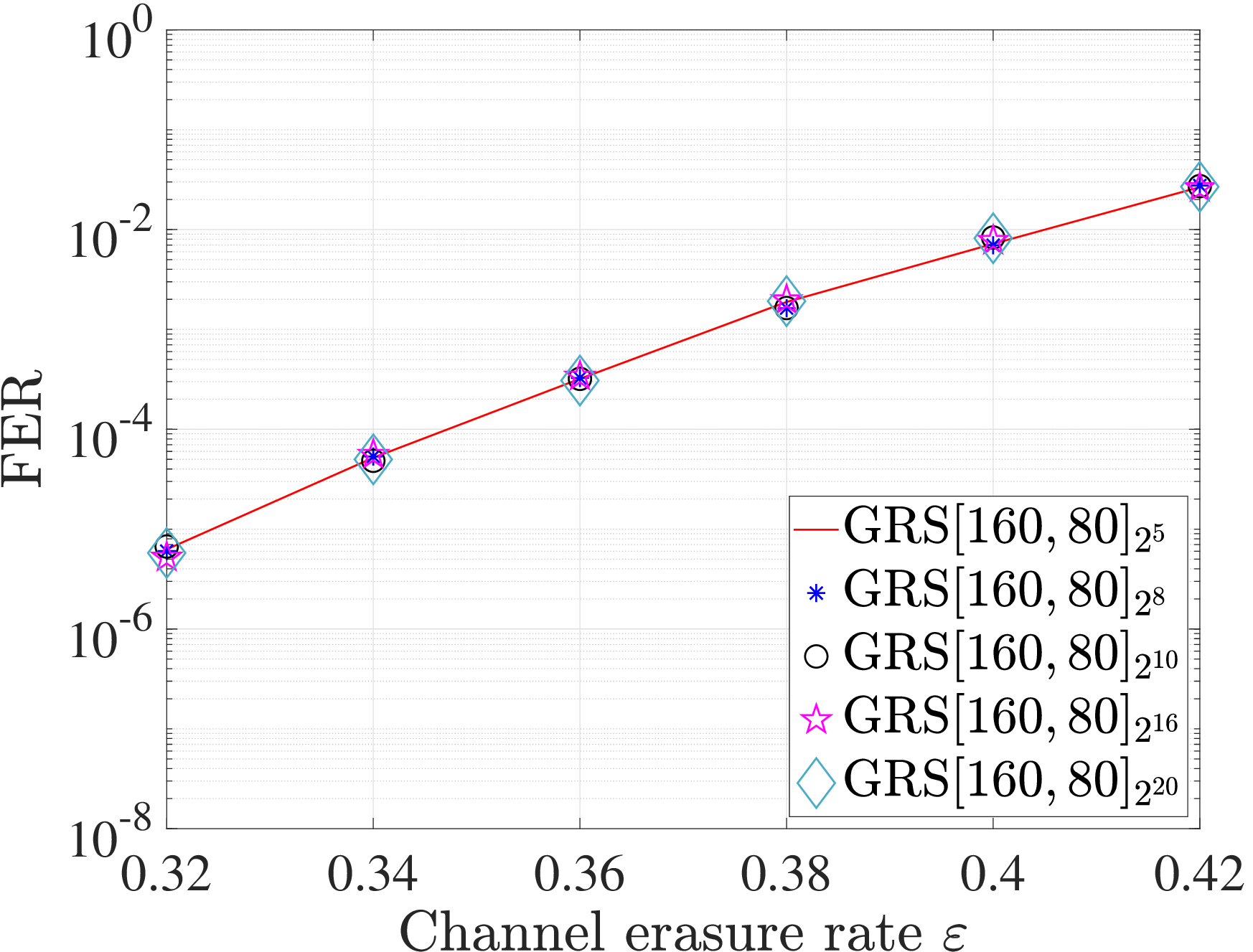}}
	\caption{
			Performance of the original RS codes and the generalized RS codes  whose binary images have same code length $N$ and dimension $K$ over different fields~(different $m$) over BECs.	}
	\label{RSGRS_differentm}  
\end{figure}

\subsection{BPSK-AWGN Channels}
When incorporating soft information to measure reliability, it becomes clear that the proposed decoding algorithm developed for the BECs can also be adapted to BPSK-AWGN channels.

Consider the vector $\bm x = \phi^{-1}(\bm c) \in \mathbb{F}_2^N$, which is transmitted over BPSK-AWGN channels. Upon receiving $\boldsymbol{y} \in \mathbb{R}^N$, we calculate  the log-likelihood ratio~(LLR) vector $\boldsymbol{r}$ by 
\begin{equation}\label{llr}
	r_i = \log {\frac{P_{Y|X}(y_i|x_i=0)}{P_{Y|X}(y_i|x_i = 1)}},\ 0 \leq i < N,
\end{equation}
and the hard-decision vector $\boldsymbol{z}$ by 
\begin{equation}\label{harddecison}
	z_i=
	\begin{cases} 
		0, & \mbox{if }r_i\geq0\\
		1, & \mbox{if }r_i<0\\
	\end{cases}
	,~0\leq i < N.
\end{equation}
We refer to the absolute values of the LLRs as the bit reliabilities corresponding to the hard decisions. The hard-decision bit sequence $\boldsymbol{z}$ is also written as a hard-decision symbol sequence $\boldsymbol{\zeta}=(\zeta_0, \zeta_1, \dots, \zeta_{n-1}) = \phi(\bm z)$ with $\zeta_i \in \mathbb{F}_q$, $0\leq i < n$. By sorting the LLR vector $\boldsymbol{r}$, we can find the most $K$ reliable bits in $\boldsymbol{z}$, whose indices are denoted collectively by $\mathcal{Z}^*$. Then, for two components $\zeta_i$ and $\zeta_j$ of the hard-decision symbol sequence $\boldsymbol{\zeta}$, we say that $\zeta_i$ is more reliable than $\zeta_j$ if $\zeta_i$ contains more bits in $\mathcal{Z}^*$ than $\zeta_j$ does. In the case when two symbols contain the same number of bits in $\mathcal{Z}^*$, the symbol with a greater sum of bit reliabilities is considered more reliable. Similar to the decoding process described in~Sec.~IV.~A, we then employ the parallel Lagrange interpolation and the change-of-basis to derive an ordered systematic generator matrix. Given the binary image of the systematic generator matrix, where the identity matrix corresponds to the MRB, we can apply the LC-OSD~\cite{LC_OSD2022,GESTCOM} for the remaining decoding process.

\begin{example}
	We have simulated the generalized RS codes with different code rates over BPSK-AWGN channels. The simulated codes are defined over the field $\mathbb{F}_{2^8}$ with length $n=16$ and  mapped into $\mathbb{F}_{2}^{128}$, denoted by $\mathscr{C}[128,K]_{2^8}$ with $K$ as the dimension. They are decoded by the LC-OSD~\cite{LC_OSD2022}~\cite{GESTCOM}. As shown in Fig.~\ref{cap_rcu}, we compare the generalized RS codes with the RCU bounds~\cite{polyanskiy2010channel}, from which we can see that the generalized RS codes perform well, closely approaching the RCU bounds. This suggests that the generalized RS codes are competitive in the finite-length region. To strength this point, we have compared the $\mathscr{C}_{\text{GRS}}[128, 64]_{2^8}$ and the extended Bose-Chaudhuri-Hocquenghem (eBCH) code $\mathscr{C}_{\text{eBCH}}[128, 64]$. The simulation results are shown in Fig.~\ref{eBCH}, from which we observe that the generalized RS code exhibits similar
	decoding performance to the eBCH code. However, the LC-OSD of generalized RS codes can have low latency due to the parallel Lagrange interpolation for constructing the extended MRB. Furthermore, compared with many other good short-length codes, such as eBCH codes, the generalized RS codes are compelling since they are rate-flexible, essentially spanning the whole range of code rates. Also notice that, as multiple-rate codes, the generalized RS codes can be concatenated with classified enumerative coding to implement joint source-channel coding~(JSCC)~\cite{QuasiOSD} for short packet transmission.
\end{example}

\begin{figure}[!t]
	\centering
	\includegraphics[width=3.3in]{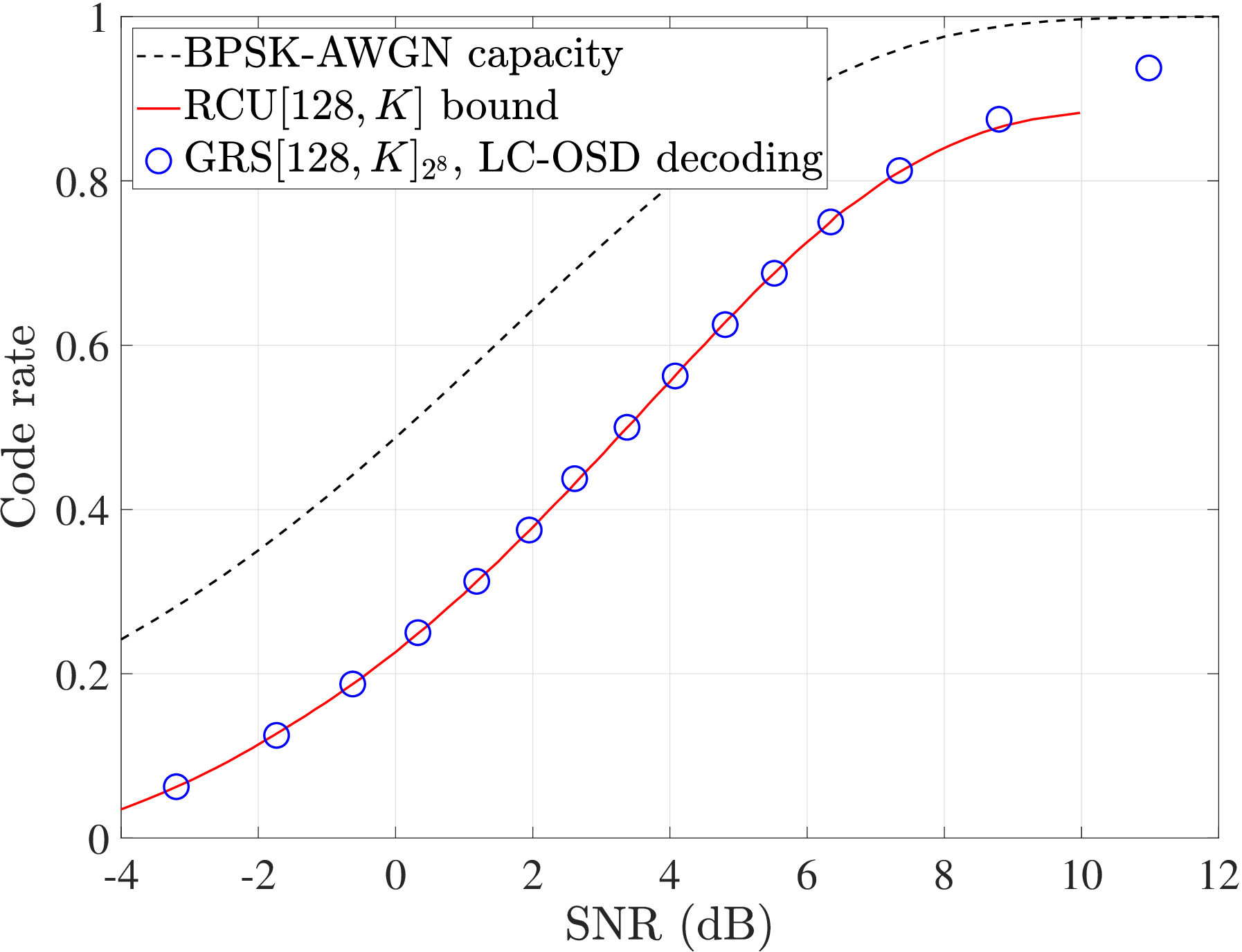}
	\caption{The finite-length capacity and the required SNR by the binary images of generalized RS codes $\mathscr{C}[128,K]_{2^8}$ under the LC-OSD~\cite{GESTCOM} at the target FER of $10^{-5}$. Here, the LC-OSD is with the  dynamic approximate ideal (DAI) stopping criterion~\cite{LC_OSDljf2023} and decoding parameters $\delta=10$, $\ell_{\text{max}}=2^{16}$ for $K\in\{8k, 1\leq k \leq 15\}$.}
	\label{cap_rcu} 
\end{figure}

\begin{figure}[!t]
	\centering
	\includegraphics[width=3.5in]{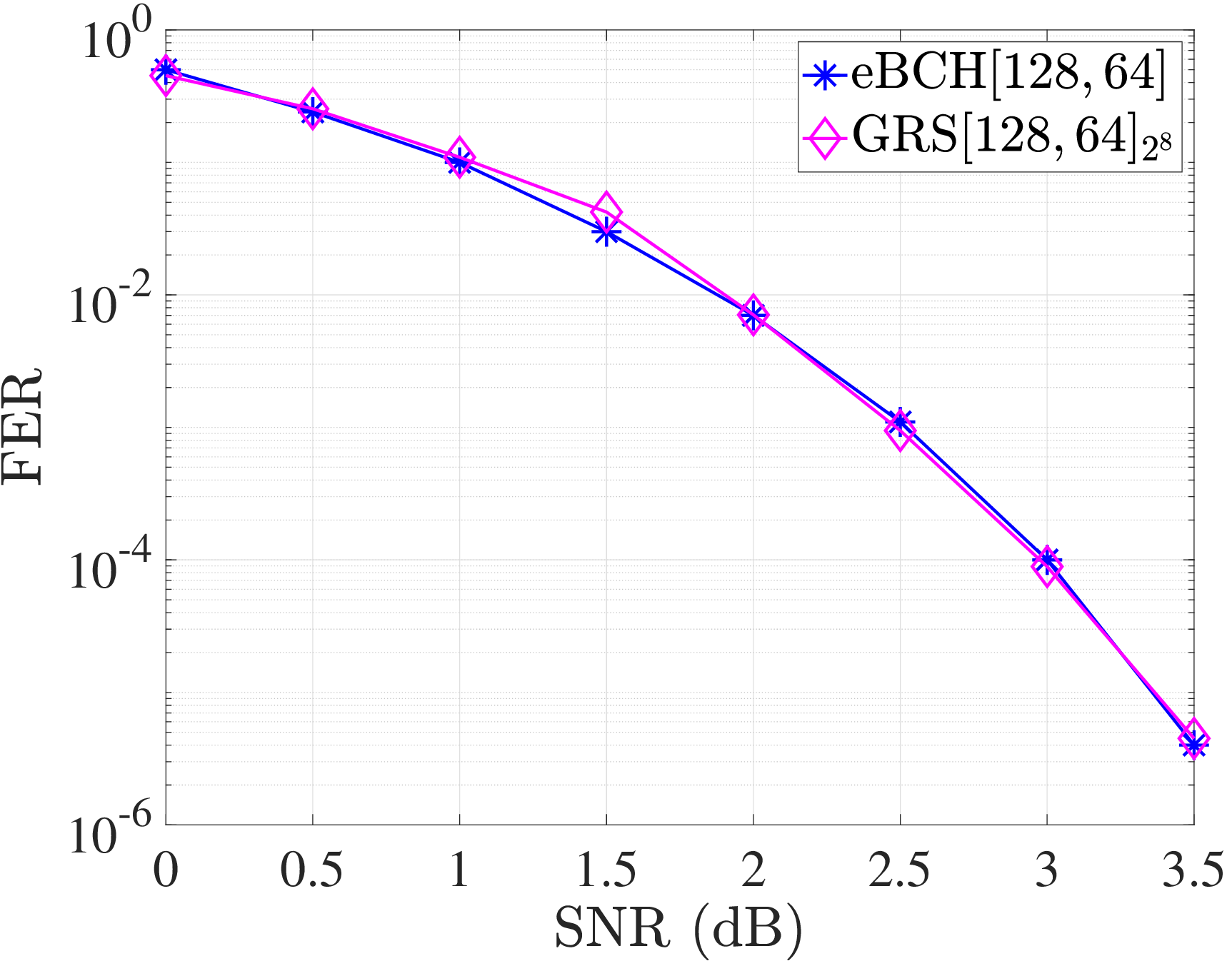}
	\caption{Performance of the eBCH code $\mathscr{C}_{\text{eBCH}}[128,64]$ and the generalized RS code $\mathscr{C}_{\text{GRS}}[128,64]_{2^8}$ under LC-OSD. Here, the LC-OSD is with the  DAI stopping criterion~\cite{LC_OSDljf2023} and decoding parameters $\delta=10$, $\ell_{\text{max}}=2^{16}$.}
	\label{eBCH} 
\end{figure}


\subsection{3PAM-AWGN Channels}

In many real-world scenarios, where information sources generate three distinct states or symbols, such as sensor networks measuring environmental conditions at three levels: low, medium, and high, ternary coding is a natural choice. Consider a ternary source vector $\bm v \in \mathbb{F}_3^K$. Recall the system model shown in Fig.~\ref{system_model}. Associated with a vector $\bm u = \phi(\bm v) \in \mathbb{F}_q^k$ with  $q=3^m$ is a generalized RS codeword $\bm c = \bm u\mathbf{G} \in \mathbb{F}_q^n$.  Then the coded sequence $\bm x = \phi^{-1}(\bm c + \bm a) \in \mathbb{F}_3^N$  is mapped into 3PAM signals, and transmitted over AWGN channels, resulting in  $\boldsymbol{y} \in \mathbb{R}^N$. Let $\bm a^{(p)} = \phi^{-1}(\bm a) \in \mathbb{F}_3^N$.

Upon receiving $\boldsymbol{y}$, we can calculate the hard-decision vector $\boldsymbol{z} \in \mathbb{F}_3^N$ by 
\begin{equation}\label{harddecison-3PAM}
	z_i= {\arg\max}_{\beta \in \mathbb{F}_3}P_{Y|X}\left(y_i|x_i=\beta+\bm a^{(p)}[i]\right),~0\leq i < N.
\end{equation}
The ternary digit~(trit) reliabilities $\bm r \in \mathbb{R}^N$ corresponding to the hard decisions are defined as
\begin{equation}\label{r-3PAM}
	r_i=\log {\frac{P_{Y|X}(y_i|x_i=z_i+\bm a^{(p)}[i])}{  {\max}_{\beta \in \mathbb{F}_3, \beta\neq z_i} P_{Y|X}\left(y_i|x_i = \beta+\bm a^{(p)}[i]\right)}},~0\leq i < N,
\end{equation}
which is the log-likelihood ratio of the hard~(the first reliable) decision to the second reliable decision. The hard-decision trit sequence $\boldsymbol{z}$ is also written as a hard-decision symbol sequence $\boldsymbol{\zeta}=(\zeta_0, \zeta_1, \dots, \zeta_{n-1}) = \phi(\bm z)$ with $\zeta_i \in \mathbb{F}_q$, $0\leq i < n$. By sorting the vector $\boldsymbol{r}$, we can find the most $K$ reliable trits in $\boldsymbol{z}$, whose indices are denoted collectively by $\mathcal{Z}^*$. Then, for two components $\zeta_i$ and $\zeta_j$ of the hard-decision symbol sequence $\boldsymbol{\zeta}$, we say that $\zeta_i$ is more reliable than $\zeta_j$ if $\zeta_i$ contains more trits in $\mathcal{Z}^*$ than $\zeta_j$ does. In the case when two symbols contain the same number of trits in $\mathcal{Z}^*$, the symbol with a greater sum of trit reliabilities is considered more reliable. Then we can employ the parallel Lagrange interpolation and the change-of-basis to derive the systematic generator matrix. Correspondingly, the systematic generator matrix can be written as a ternary systematic matrix $\widetilde{\mathbf{G}}^{(p)} = [\mathbf{I}, \mathbf{P}]$ with $\mathbf{I}$ of size $K \times K$ and $\mathbf{P}$ of size $K\times (N-K)$.  Let a vector $\bm p = (p_0,\cdots,p_{N-1})$ record the permuted positions such that $\widetilde{\mathbf{G}}^{(p)} = \mathbf{G}^{(p)} \mathbf{\Pi}_{\bm p}$, where $\mathbf{\Pi}_{\bm p}$ is a permutation matrix defined by $\bm p = (0,1,\cdots,N-1) \mathbf{\Pi}_{\bm p}$.

For a test vector $\hat{\boldsymbol{v}} \in \mathbb{F}_3^{N}$, we can define its corresponding test error pattern~(TEP) $\boldsymbol{e}\in \mathbb{F}_3^N$ as
\begin{equation}
	\boldsymbol{e} \triangleq \boldsymbol{z}-\hat{\boldsymbol{v}}.
\end{equation}
This can be written as $\boldsymbol{z}= \hat{\boldsymbol{v}}+\boldsymbol{e}$ and hence the channel is transformed into an additive noise channel, which accepts the codeword as input and delivers the hard-decision vector as output. Define the soft weight of a TEP $\boldsymbol{e}$, denoted by $\gamma(\boldsymbol{e})$, as
\begin{equation}\label{softweight}
\begin{aligned}	
	\gamma(\boldsymbol{e}) &\triangleq \log  \frac{P_{Y|X}(\boldsymbol{y}|\boldsymbol{z} + \bm a^{(p)})}{P_{Y|X}(\boldsymbol{y}|\boldsymbol{z}-\boldsymbol{e}+ \bm a^{(p)})} \\
	&= \sum\limits_{i:e_i \neq 0} \log P_{Y|X}(y_i|z_i + \bm a^{(p)}[i])-\log P_{Y|X}(y_i|z_i - e_i + \bm a^{(p)}[i])\\
	& =  \sum\limits_{i:e_i \neq 0} \gamma_i(e_i),
\end{aligned}	
\end{equation}
where $\gamma_i(e_i) =\log P_{Y|X}(y_i|z_i + \bm a^{(p)}[i])-\log P_{Y|X}(y_i|z_i - e_i + \bm a^{(p)}[i]) $. We can see that $\gamma(\boldsymbol{e}) \geq 0$ for any vector $\boldsymbol{e} \in \mathbb{F}_3^{N}$ and $\gamma(\boldsymbol{e}) = 0$ for $\boldsymbol{e} = \boldsymbol{0}$. In contrast to the Hamming weight, the soft weight of $\boldsymbol{e}$, as a weighted sum, is determined not only by its non-zero components but also by the corresponding reliabilities $\gamma_i(e_i)$ with $e_i \neq 0$.  Then the ML decoding is equivalent to the lightest soft weight decoding. That is,  the ML decoding is equivalent to
\begin{equation}
	\begin{aligned}
		\min\limits_{\boldsymbol{e}\in \mathbb{F}_3^N} \quad &\gamma(\boldsymbol{e})\\
		\mbox{s.t.}\quad & \boldsymbol{e}\mathbf{H}^T=\boldsymbol{s},
	\end{aligned}
\end{equation}
where $\boldsymbol{s}=\boldsymbol{z}\mathbf{H}^T$ is the computable syndrome and $\mathbf{H}$ is the parity-check matrix of size $(N-K) \times N$. Notice that the ternary parity-check matrix $\mathbf{H}$ can be derived  from the ternary generator matrix $\mathbf{G}^{(p)}$.

Given a preset non-negative integer $\delta \leq N-K$, the derived ternary systematic matrix $\widetilde{\mathbf{G}}^{(p)} = [\mathbf{I}, \mathbf{P}]$ can be written as follows,
\begin{equation}
	\widetilde{\mathbf{G}}^{(p)}=
	\begin{bNiceArray}{cw{c}{1cm}|[tikz=densely dashed]cw{c}{1.3cm}|[tikz=densely dashed]cw{c}{1.3cm}}[margin, first-row, last-col]
		\Block{1-2}{_{K\textrm{ columns}}} & & \Block{1-2}{_{\delta\textrm{ columns}}} & &\Block{1-2}{_{N-K-\delta\textrm{ columns}}}\\
		\Block{2-2}{\mathbf{I}} & &\Block{2-2}{\mathbf{P}_1} & & \Block{2-2}{\mathbf{P}_2} & &\Block{2-1}{^{\rotate K\textrm{ rows}}}\\
		& & &\\
	\end{bNiceArray}.
\end{equation}
Then a TEP $\widetilde{\boldsymbol{e}} = \bm e \mathbf{\Pi}_{\bm p}$ can be written as $\widetilde{\boldsymbol{e}}= (\widetilde{\boldsymbol{e}}_I,\widetilde{\boldsymbol{e}}_{P_1},\widetilde{\boldsymbol{e}}_{P_2})$ with $\widetilde{\boldsymbol{e}}_I \in \mathbb{F}_3^{K}$, $\widetilde{\boldsymbol{e}}_{P_1} \in \mathbb{F}_3^{\delta}$ and $\widetilde{\boldsymbol{e}}_{P_2} \in \mathbb{F}_3^{N-K-\delta}$. Similarly, $\widetilde{\boldsymbol{z}}= (\widetilde{\boldsymbol{z}}_I,\widetilde{\boldsymbol{z}}_{P_1},\widetilde{\boldsymbol{z}}_{P_2})$. By a valid TEP $\widetilde{\boldsymbol{e}}$, we mean a vector $\widetilde{\boldsymbol{e}}$ such that $(\widetilde{\boldsymbol{z}}-\widetilde{\boldsymbol{e}})\mathbf{\Pi}_{\bm p}^{-1}$ is a codeword. For any valid TEP $\widetilde{\boldsymbol{e}}$, we have
\begin{equation}
	\begin{aligned}\label{re-encoding}
		{\widetilde{\boldsymbol{e}}}_{I}\mathbf{P}_2^{\top}  + {\widetilde{\boldsymbol{e}}}_{P_2}  =
		{\widetilde{\boldsymbol{z}}}_{I}\mathbf{P}_{2}^{\top} + {\widetilde{\boldsymbol{z}}}_{P_2}.
	\end{aligned}
\end{equation}
\begin{equation}
	\label{equ:e-parity}
	\begin{aligned}
		{\widetilde{\boldsymbol{e}}}_{I}\mathbf{P}_1^{\top}  + {\widetilde{\boldsymbol{e}}}_{P_1}  =
		{\widetilde{\boldsymbol{z}}}_{I}\mathbf{P}_{1}^{\top} + {\widetilde{\boldsymbol{z}}}_{P_1},
	\end{aligned}
\end{equation}
From~\eqref{re-encoding}, we can see that ${\widetilde{\boldsymbol{e}}}_{P_2}$ and ${\widetilde{\boldsymbol{e}}}$ are uniquely determined by $(\widetilde{\boldsymbol{e}}_{I},\widetilde{\boldsymbol{e}}_{P_1})$. Hence, we only need to query $(\widetilde{\boldsymbol{e}}_{I},\widetilde{\boldsymbol{e}}_{P_1})$ instead of ${\widetilde{\boldsymbol{e}}}$. Utilizing the local constraints~\eqref{equ:e-parity}, we can query  $(\widetilde{\boldsymbol{e}}_{I},\widetilde{\boldsymbol{e}}_{P_1})$ in such an order that the partial soft weights are non-decreasing, i.e., $ \gamma\left(\widetilde{\boldsymbol{e}}_{I}^{(1)}\right)+\gamma\left(\widetilde{\boldsymbol{e}}_{P_1}^{(1)}\right) \leq  \cdots \leq \gamma\left(\widetilde{\boldsymbol{e}}_{I}^{(\ell_{\text{max}})}\right)+\gamma\left(\widetilde{\boldsymbol{e}}_{P_1}^{(\ell_{\text{max}})}\right)$, where  $\ell_{\text{max}}$ is the  maximum query number. This can be achieved by the serial list Viterbi algorithm~(SLVA)~\cite{Nambirajan1994SLVA} over a trellis which is specified by the matrix $[\mathbf{I},\mathbf{P}_{1}]$ and has at most $3^\delta$ states. Finally, the ternary decoding is terminated whenever the maximum query number is reached or the ML codeword has been identified, delivering the most likely candidate codeword as output. This query process is similar to the LC-OSD~\cite{LC_OSD2022,GESTCOM} methods for binary block codes but it extends LC-OSD into the realm of ternary decoding. Notice that, if $\delta = 0$, we can directly query $\widetilde{\boldsymbol{e}}_{I}$ in non-decreasing soft weights, with the aid of the flipping pattern tree~(FPT)~\cite{FPT}.

It is not difficult to imagine that a ternary source data of length $K$ can also be represented~(in theory) as a binary sequence of length $K\log_2 3$, allowing for the use of binary coding.  However, we will demonstrate in the following example that ternary coding can be preferable to binary coding for ternary sources.

\begin{example}
	We have simulated the ternary images of generalized RS codes $\mathscr{C}[63,K]_{3^3}$ with $K \in \{32, 35\}$ over 3PAM-AWGN channels. For a fair comparison, we have also simulated the binary images of generalized RS codes $\mathscr{C}[64,K']_{2^4}$ with $K' \in \{51, 56\}$ over BPSK-AWGN channels, which have similar code lengths and rates~(bits per channel use, denoted by $b_c$) to the ternary coding. The simulation results are shown in Fig.~\ref{fig_Ternary_source}, from which we can see that for comparable bits per channel use and code lengths, the ternary coding outperforms the binary coding. Specifically, the ternary generalized RS codes with $b_c = 0.80$ and $0.88$ achieve about $0.7$~dB and $1.5$~dB gain at $\rm{FER}=10^{-4}$, respectively, compared with the corresponding binary generalized RS codes. To further demonstrate the superiority of the ternary generalized RS codes over the binary codes, we have compared the ternary generalized RS code $\mathscr{C}_{\text{GRS}}[128,76]_{3^4}$ with the binary extended Hamming code $\mathscr{C}_{\text{Hamming}}[128,120]$.  The simulation results are shown in Fig.~\ref{TernaryHamming}, from which we see that the ternary generalized RS code achieves about  $1.2$ dB gain at $\rm{FER}=10^{-4}$ over the extended Hamming code.
\end{example}

\begin{figure}[!t]
	\centering
	\subfloat[\label{performance_51}]{\includegraphics[width=3.1in]{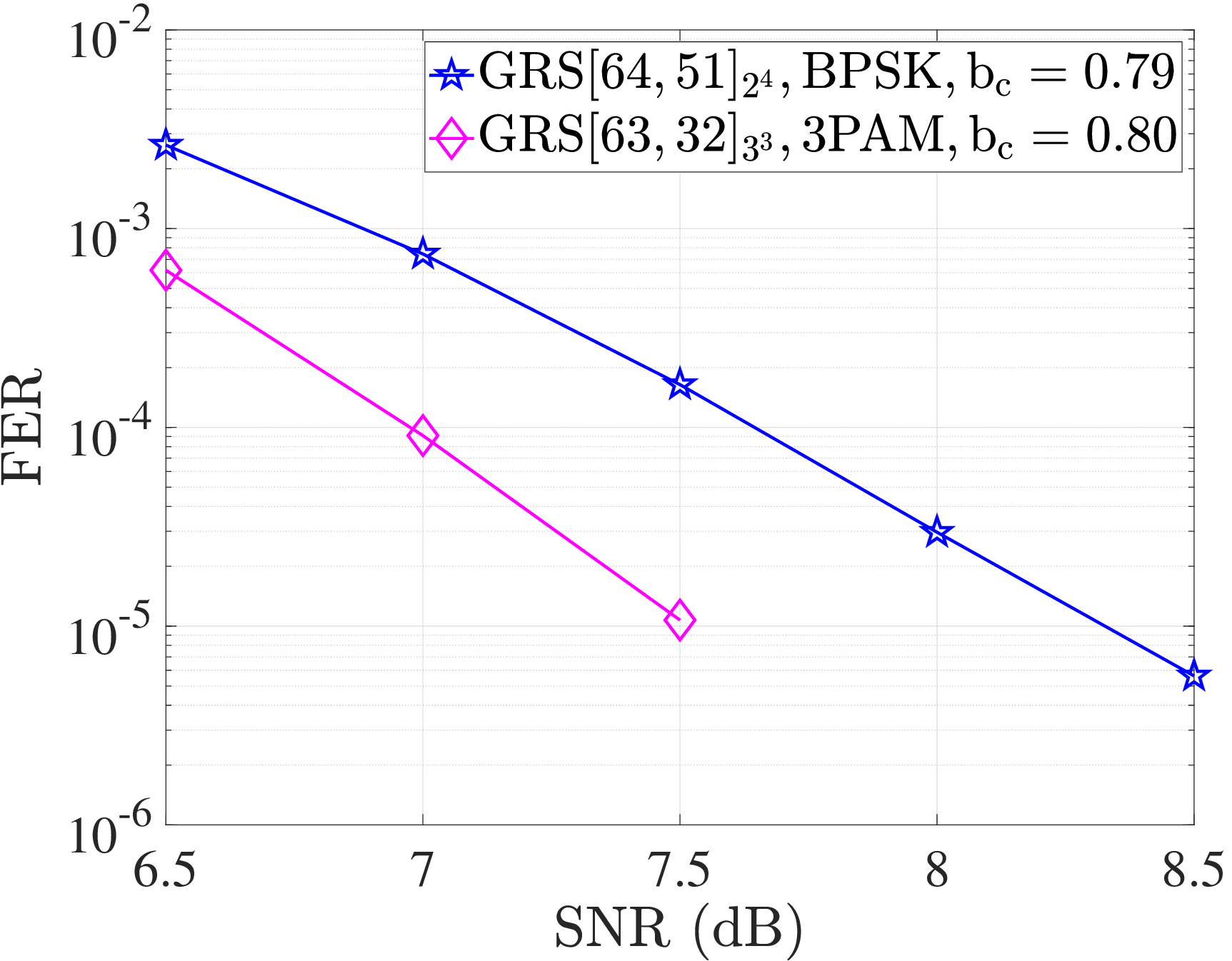}}
	\hfill
	\subfloat[\label{performance_56}]{\includegraphics[width=3.1in]{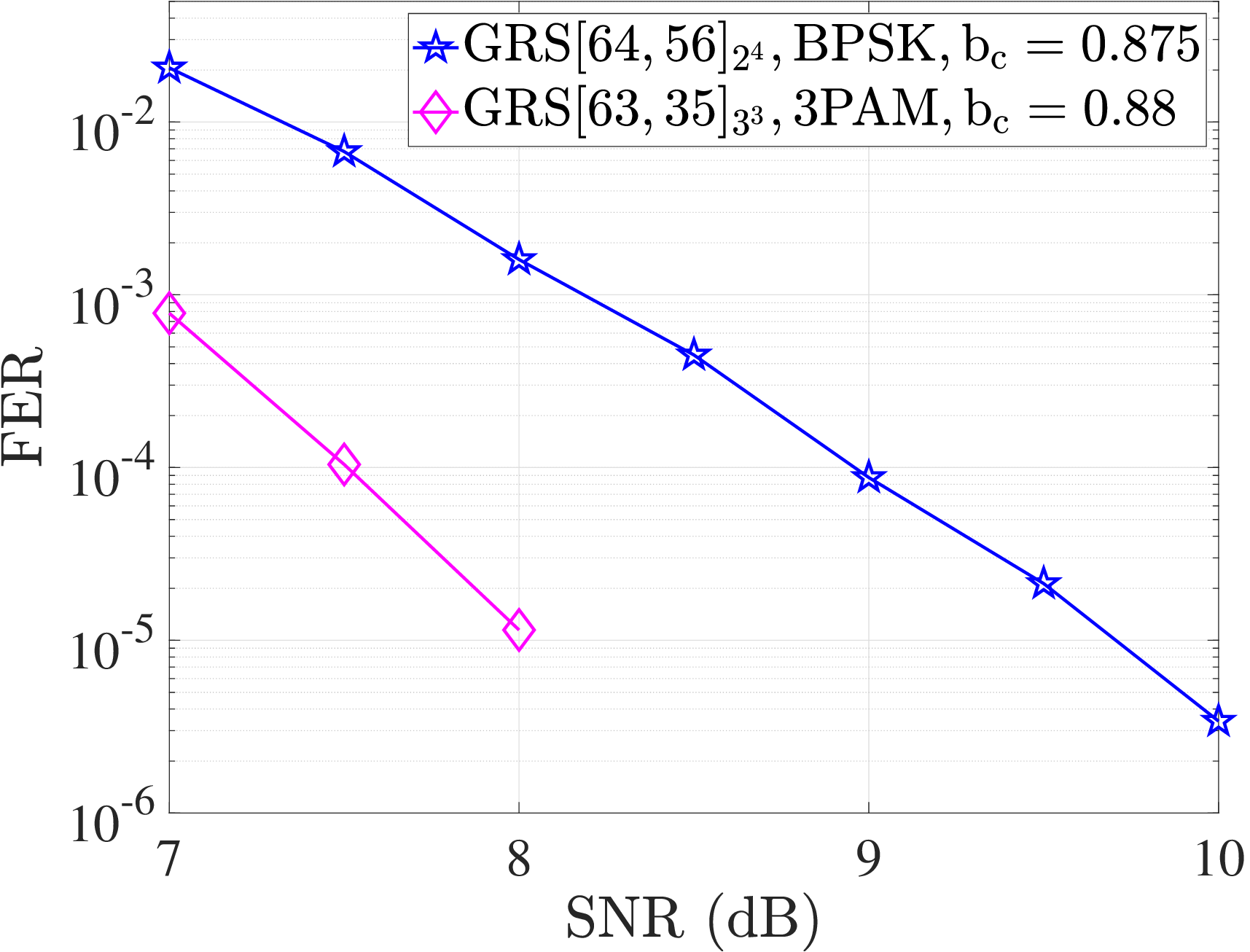}}
	\caption{
	Performance comparison between ternary coding and binary coding with similar bits per channel use $b_c$ and code lengths under LC-OSD. Here, the LC-OSD is with the DAI stopping criterion~\cite{LC_OSDljf2023}, decoding parameters $\delta=4$, and $\ell_{\text{max}}=2^{14}$. (a) Comparison between generalized RS codes $\mathscr{C}[63,32]_{3^3}$ with $b_c = 0.80$ and  $\mathscr{C}[64,51]_{2^4}$ with $b_c = 0.79$. (b) Comparison between generalized RS codes $\mathscr{C}[63,35]_{3^3}$ with $b_c = 0.88$ and  $\mathscr{C}[64,56]_{2^4}$ with $b_c = 0.875$.
	}
	\label{fig_Ternary_source}  
\end{figure}

\begin{figure}[!t]
	\centering
	\includegraphics[width=3.5in]{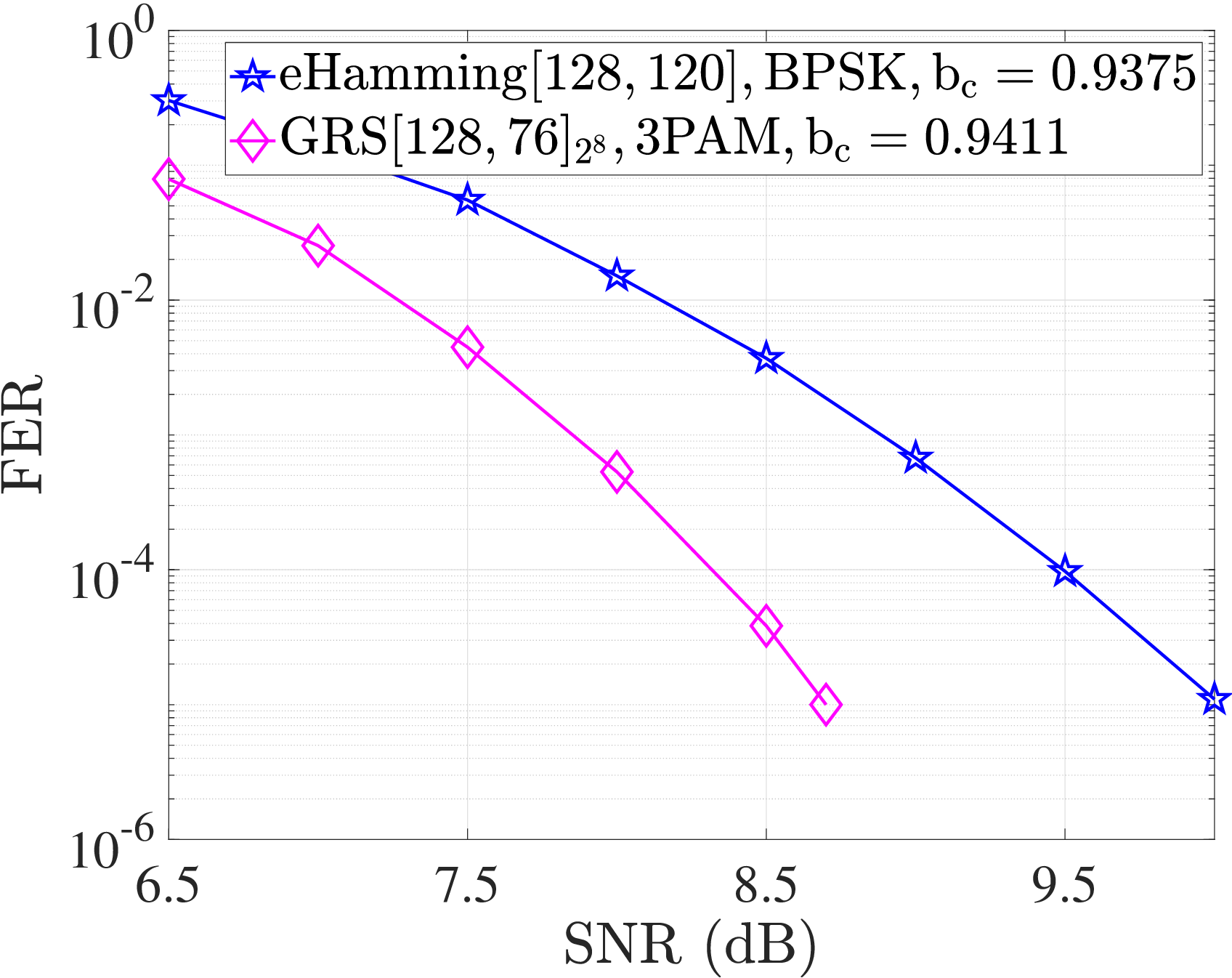}
	\caption{Performance comparison between the ternary generalized RS code $\mathscr{C}_{\text{GRS}}[128,76]_{3^4}$ and the binary extended Hamming code $\mathscr{C}_{\text{Hamming}}[128,120]$  with similar bits per channel use $b_c$ and the same code length $N=128$ under LC-OSD.  Here, the LC-OSD is with the DAI stopping criterion~\cite{LC_OSDljf2023}, decoding parameters $\delta=6$, and $\ell_{\text{max}}=2^{16}$. }
	\label{TernaryHamming} 
\end{figure}

It is worth noting that ternary coding can also applied to binary sources. Consider a binary source vector $\bm v \in \mathbb{F}_2^K$. A simple way to implement ternary coding for binary source is to represent each three bits in $\bm v$ using two trits, delivering a ternary vector $\bm v' \in \mathbb{F}_3^{K'}$ with $K' = 2\lceil K/3\rceil$.  Thus we can encode $\bm v'$ into the ternary image of a generalized RS code for the 3PAM-AWGN channels and employ the ternary LC-OSD algorithm at the receiver. As demonstrated in the following example, ternary coding  tends to outperform binary coding for binary sources, especially when the code rates of the binary coding are high.

\begin{example}
Consider binary sources of length $K \in \{51,56,120\}$, which can be represented as ternary vectors of length $K' \in \{34,38,80\}$. We have presented in Figs.~\ref{fig_Binary_source}-\ref{fig_Binary_sourceHamming} the performance comparison between the ternary and the binary coding for these binary sources under LC-OSD. As we can see,  for comparable code lengths, ternary coding outperforms binary coding for binary sources. Specifically, the generalized RS codes under ternary coding achieve about $0.5$~dB and $1.0$~dB gain at $\rm{FER}=10^{-5}$ for binary sources of length $51$ and $56$, respectively, compared with the corresponding generalized RS codes under binary coding. Compared with the binary extended Hamming code, the ternary generalized RS code can have $1.0$ dB coding gain at $\rm{FER}=10^{-4}$ for binary source of length $120$.
\end{example}

\begin{figure}[!t]
	\centering
	\subfloat[\label{performance_51}]{\includegraphics[width=3.1in]{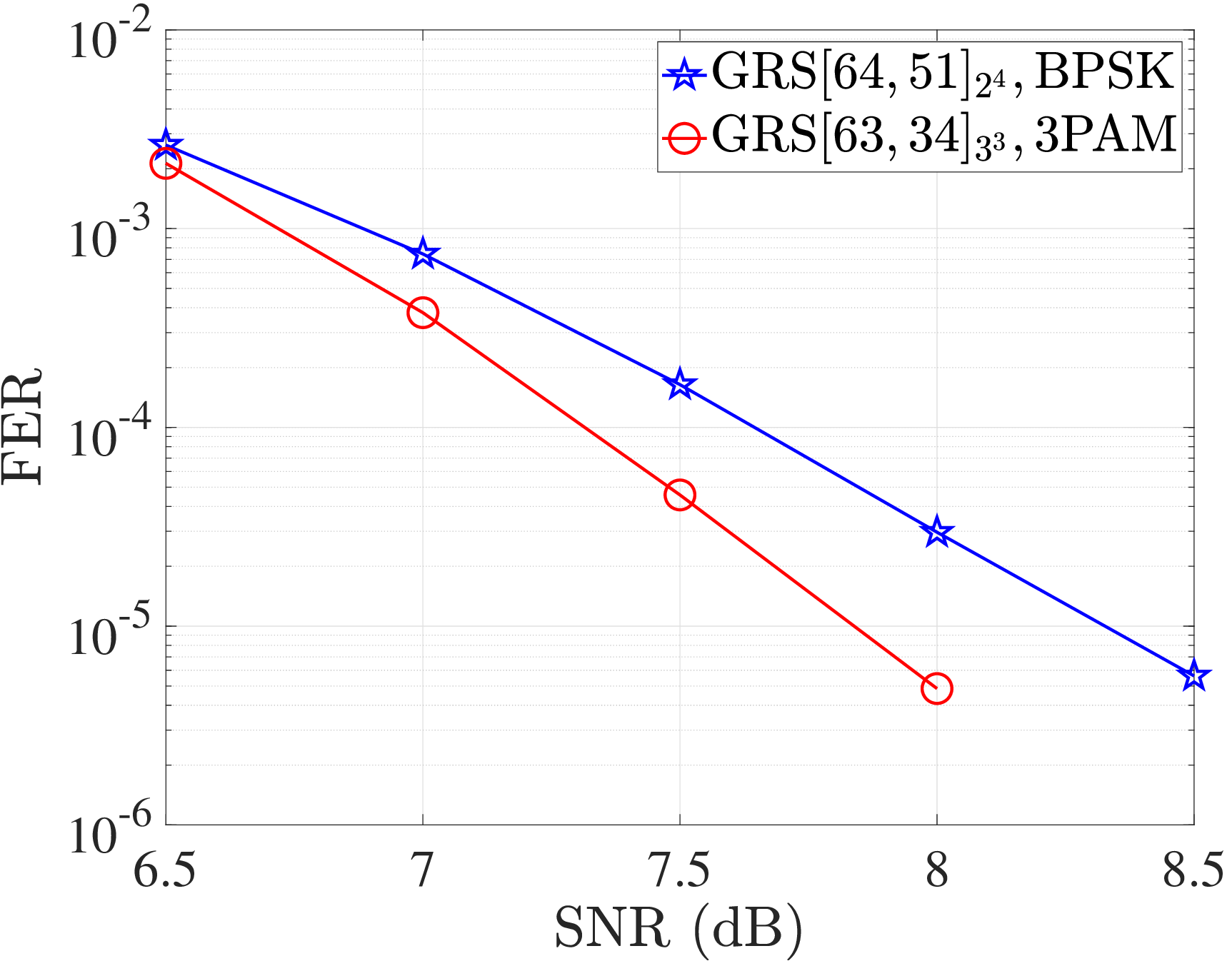}}
	\hfill
	\subfloat[\label{performance_56}]{\includegraphics[width=3.1in]{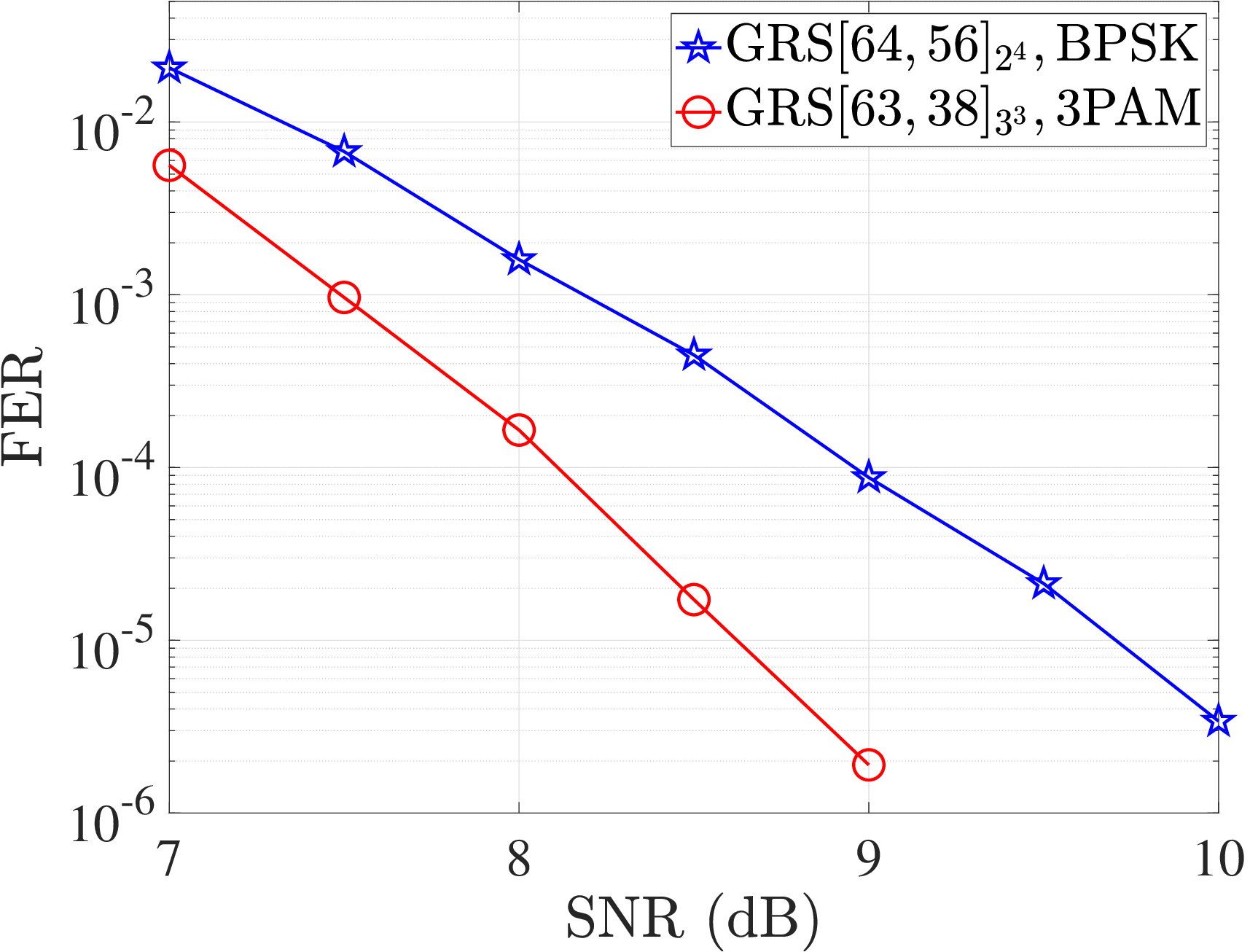}}
	\caption{
		Performance comparison between ternary coding and binary coding for binary sources under LC-OSD. Here, the LC-OSD is with the DAI stopping criterion~\cite{LC_OSDljf2023}, decoding parameters $\delta=4$, and $\ell_{\text{max}}=2^{14}$. (a) Comparison between generalized RS codes $\mathscr{C}[63,34]_{3^3}$ and  $\mathscr{C}[64,51]_{2^4}$ for binary source of length $51$. (b) Comparison between generalized RS codes $\mathscr{C}[63,35]_{3^3}$ and  $\mathscr{C}[64,56]_{2^4}$ for binary source of length $56$.
	}
	\label{fig_Binary_source}  
\end{figure}

\begin{figure}[!t]
	\centering
\includegraphics[width=3.5in]{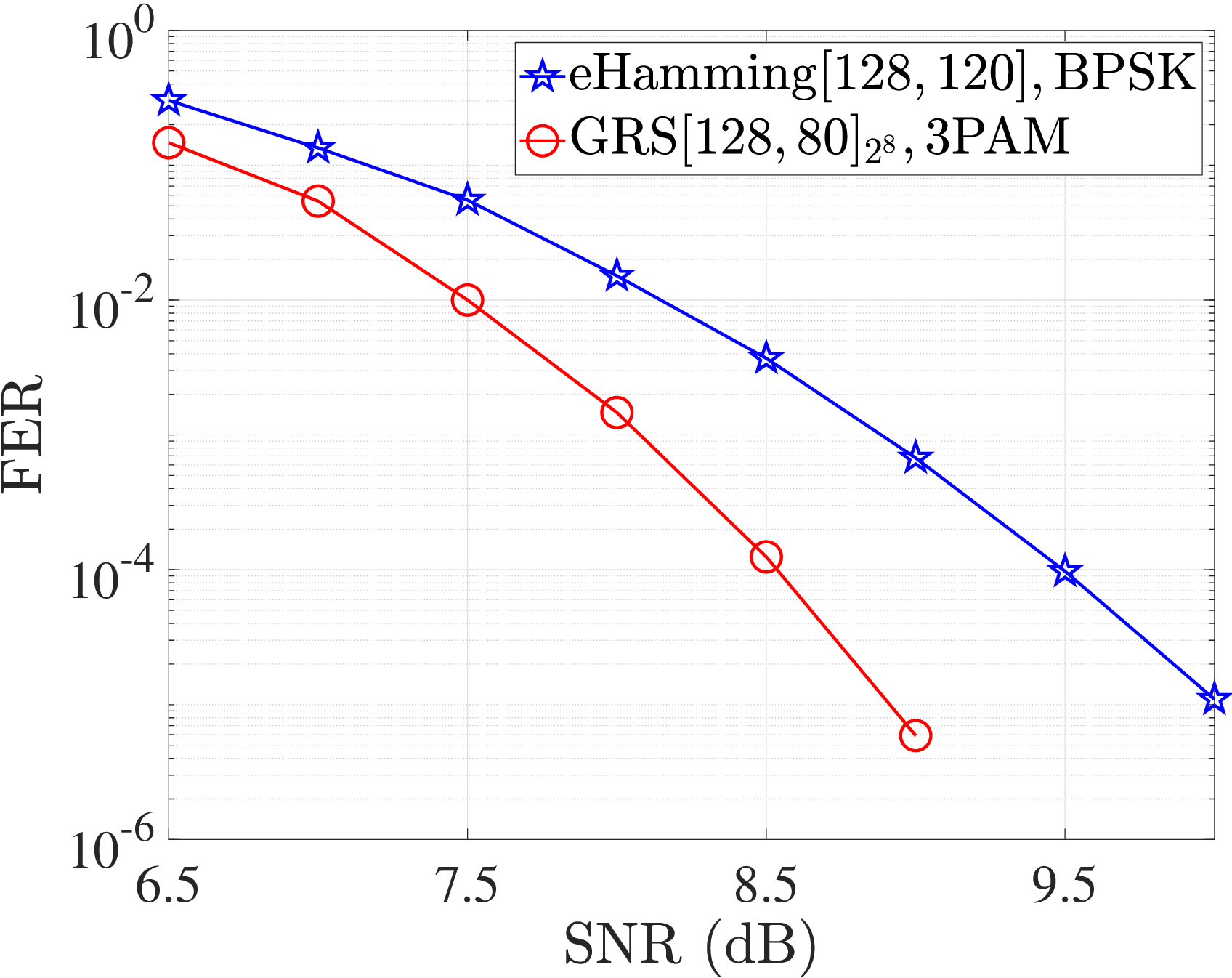}
\caption{Performance comparison between the ternary generalized RS code $\mathscr{C}_{\text{GRS}}[128,80]_{3^4}$ and the binary extended Hamming code $\mathscr{C}_{\text{Hamming}}[128,120]$ for binary source of length $120$ under LC-OSD.  Here, the LC-OSD is with the DAI stopping criterion~\cite{LC_OSDljf2023}, decoding parameters $\delta=6$, and $\ell_{\text{max}}=2^{16}$. }
	\label{fig_Binary_sourceHamming}  
\end{figure}

\section{Conclusions}
We have proved that the generalized RS code is capacity-achieving over $p$-ary symmetric memoryless channels in the infinite-length region. In the finite-length region, we have presented an ML decoding algorithm for generalized RS codes over BECs, where the ML codeword is reconstructed through the change-of-basis on an ordered systematic matrix obtained in parallel by Lagrange interpolation. This implementation accelerates the conventional GE operation and reduces the decoding latency. Additionally, this decoding technique over BECs is applied to the OSD-like algorithms over BPSK-AWGN channels and 3PAM-AWGN channels.

\section*{Acknowledgment}
The first author would like to thank Mr. Jifan Liang from Sun Yat-sen University for his helpful discussions.

\bibliographystyle{IEEEtran}
\bibliography{ref}

\end{document}